\newtheorem{lemma}{Lemma}
\newtheorem{theorem}[lemma]{Theorem}
\theoremstyle{definition}
\newtheorem{definition}[lemma]{Definition}
\newcommand{\EP}{{\mathcal{E}}}
\newcommand{\LP}{{\mathcal{P}}}
\newcommand{\CC}{{\mathbb{C}}}
\newcommand{\RR}{{\mathbb{R}}}
\newcommand{\ZZ}{{\mathbb{Z}}}
\newcommand{\polylog}{{\mathrm{polylog}}}
\newcommand{\poly}{{\mathrm{poly}}}
\newcommand{\calO}{{\mathcal{O}}}
\newcommand{\rd}{{\mathrm{d}}}
\newcommand{\norm}[1]{{\left\| {#1} \right\|}}
\newcommand{\abs}[1]{{\left| {#1} \right|}}
\DeclareMathOperator*{\Tr}{{Tr}}
\DeclareMathOperator*{\supp}{{supp}}
\begin{document}

\title{Product Decomposition of Periodic Functions in Quantum Signal Processing}
\author{Jeongwan Haah}
\affiliation{Microsoft Quantum and Microsoft Research, Redmond, Washington, USA}
\date{20 September 2019}
\begin{abstract}
We consider an algorithm to approximate complex-valued periodic functions $f(e^{i\theta})$
as a matrix element of a product of $SU(2)$-valued functions,
which underlies so-called quantum signal processing.
We prove that the algorithm runs in time $\mathcal O(N^3 \mathrm{polylog}(N/\epsilon))$ 
under the random-access memory model of computation
where $N$ is the degree of the polynomial that approximates $f$ with accuracy $\epsilon$;
previous efficiency claim assumed a strong arithmetic model of computation and lacked numerical stability analysis.
\end{abstract}

\maketitle 

\section{Introduction}

Quantum signal processing~\cite{QSP,LowMethod,LC16} refers to a scheme 
to construct an operator $V$ from a more elementary unitary $W$
where $V= \sum_\theta f(e^{i\theta}) \ket \theta \bra \theta$ and $W = e^{i\theta} \ket \theta \bra \theta$
share the eigenvectors but the eigenvalues of $V$ are transformed by a function $f$
from those of $W$.
The transformation requires only one ancilla qubit, 
and is achieved by implementing control-$W$ and control-$W^\dagger$,
interspersed by single-qubit rotations on the control,
and final post-selection on the control.%
\footnote{
Sometimes it is possible to avoid controlled version of $W$~\cite{LC16,GSLW2018},
but we contend ourselves with this implementation for its simplicity in presentation.
The result of this paper is applicable for the ancilla-free variant;
the only change one may have to do is to replace a variable $e^{i\theta/2}$ with $e^{i\theta}$.
}
This technique produced gate-efficient quantum algorithms for, e.g., Hamiltonian simulations,
which is asymptotically optimal when the Hamiltonian is sparse and given as a blackbox,
or as a linear combination of oracular unitaries~\cite{ChildsWiebe,BCCKS}.
Furthermore, this technique with rigorous error bounds appears to be useful and competitive 
even for explicitly described, rather than oracular,
local Hamiltonian simulation problems~\cite{Childs2017,HHKL}.
It is also promised to be useful in solving linear equations~\cite{HHL,LC16,GSLW2018}.

However, in quantum signal processing the classical preprocessing to find interspersing single-qubit rotations 
for a given transformation function $f$ has been so numercially unstable
that it has been unclear whether it can be performed efficiently.
In fact,
Ref.~\cite[App.~H.3]{Childs2017} reports that the computation time is ``prohibitive'' 
to obtain sequences of length $\gtrsim 30$ of interspersing unitaries for Jacobi-Anger expansions 
that we explain in \cref{sec:JacobiAnger}.
The true usefulness of quantum signal processing hinges upon the ability 
to compute long sequences of interspersing single-qubit rotations.

It has been asserted that there exists a polynomial time classical algorithm
in Refs.~\cite{LowMethod,GSLW2018},
but these work do not consider numerical instability.
If a computational model assumes that
any real arithmetic with arbitrarily high precision can be done in unit time,
then
an unavoidable conclusion is that 
not only can one factor integers in time that is linear in the number of digits~\cite{Shamir1979},
but also solve NP-hard problems in polynomial time~\cite{Schoenhage1979}.%
\footnote{
The basic idea in Ref.~\cite{Schoenhage1979} 
is to relate the expansion of a Boolean formula in a conjunctive form into a disjunctive form
with the expansion of a polynomial from its factors.
Then, the satisfiability translates to the positiveness of certain coefficients in the expansion.
The variable in the polynomial is set to a sufficiently large number,
effectively encoding the entire polynomial in a single big number.
An important elementary fact that underlies the power of arithmetic model is
that we only need $O(n)$ compositions of squaring operation $x \mapsto x^2$
to reach a $2^n$-bit number.
}
In a seemingly mundane problem involving real numbers,
the number of required bits during the computation can be a priori very large.
For example,
it is still an open problem whether one can decide
the larger between $\sum_{j=1}^k \sqrt{a_j}$ and $\sum_{j=1}^k \sqrt{b_j}$,
which are sums of square roots of positive integers $a_j,b_j$ that are smaller than $n$,
in time $\poly(k \log n)$ on a Turing machine; 
see e.g. \cite{QIAN2006,Cheng2009} for recent results.

The numerical instability of previous methods may be attributed to 
expansions of large degree polynomials
that are found by roots of another polynomial.
Crudely speaking, there are two problems in this approach.
First, the polynomial expansions can be regarded as the computation of convolutions,%
\footnote{
A polynomial $\sum_j a_j t^j $ can be identified with a ``coefficient function''
 $j \mapsto a_j$.
The coefficient function of the product of two polynomials
is the convolution of the two coefficient functions.
} 
which, when naively iterated, may suffer from numerical instability.
Second, although the root finding is a well-studied problem,
to use the roots to construct another polynomial
one has to understand the distribution of the roots to keep track of the loss of precision.
These problems were not addressed previously.

Here, we refine a classical algorithm to find interspersing single-qubit rotations
and bound the number of required bits in the computation for a desired accuracy to a final answer.
We conclude that the classical preprocessing can indeed be done
in polynomial time on a Turing machine.
We generally adopt the methods of Ref.~\cite{LowMethod},
but make manipulations easier by avoiding trigonometric polynomials.
Some generalizations are obtained with simpler calculations.
For the numerical stability and analysis,
our algorithm avoids too small numbers 
by sacrificing approximation quality in the initial stage,
and replaces polynomial expansions by a Fourier transform.
These modifications enable us to handle the problems that are mentioned above.
However, it should be made clear that our refinement also requires high precision arithmetic.
Specifically, we show that $\calO(N \log (N/\epsilon))$ bits of precision during the execution of our algorithm 
is sufficient to produce a reliable final decomposition,
where $N$ is the degree of the polynomial that approximates
a given transformation function $f$ of eigenvalues up to additive error $\epsilon$.
Previously no such bound was known.
On a sequential (rather than parallel) random access machine,
our algorithm runs in time $\calO(N^3 \polylog(N/\epsilon))$.
In our rudimentary numerical experiment 
we were able to produce a sequence of length over 2000
for the Jacobi-Anger expansion
on a laptop by a few hours of computation.%
\footnote{
Note added 5 May 2020:
This was based on an implementation on Wolfram Mathematica.
An implementation in a lower level programming language 
can be found at https://github.com/microsoft/Quantum-NC/tree/master/src/simulation/qsp
with a sample application 
(https://github.com/microsoft/Quantum-NC/tree/master/samples/simulation/qsp).
We have observed that the numerical instability is not harmful for the Jacobi-Anger expansion;
the number of bits of precision that we used in a Mathematica implementation was much more than necessary.
The reduction in the number of bits of precision and the lower level language implementation,
have expedited the calculation by orders of magnitude.
Empirically the time complexity appeared to be quadratic in $N$ for the Jacobi-Anger expansion.
}

We will start by reviewing quantum signal processing in the next section,
and then develop an algorithm and analyze it.
In two short sections later we provide self-contained treatment of
polynomial approximations for Hamiltonian simulation and matrix inversion problems.
The section on Hamiltonian simulation contains some running time data of our algorithm.
Throughout the paper, we use $U(1)$ to denote the group of all complex numbers of unit modulus.
Sometimes we will refer to $U(1)$ as the unit circle.
As usual, $i = \sqrt{-1}$, and 
$X = \ket 1 \bra 0 + \ket 0 \bra 1,
Y = i\ket 1 \bra 0 -i \ket 0 \bra 1, 
Z = \ket 0 \bra 0 - \ket 1 \bra 1$
are Pauli matrices.

\section{Quantum Signal Processing}

To understand how the eigenvalue transformation (signal processing) works,
it is convenient to consider the action of control-$W$ 
restricted to an arbitrary but fixed eigenstate $\ket \theta$ of $W$.
The eigenvalue $e^{i\theta}$ associated with $\ket \theta$ is kicked back to the control qubit
to induce a unitary $\ket 0 \bra 0 + e^{i\theta} \ket 1 \bra 1$ on the control qubit.
Conjugating the control-$W$ by a single qubit unitary on the control qubit,
we see that $\ket 0$ and $\ket 1$ can be any orthonormal basis vectors $\ket{0'}, \ket{1'}$ of the control qubit.
If we allow ourselves to implement the inverse of the control-$W$, which is reasonable,
we can also implement 
\begin{align}
&\Big(\ket{0'} \bra{0'} + e^{i\theta} \ket{1'} \bra{1'}\Big) \Big(\ket{0''} \bra{0''} + e^{-i\theta} \ket{1''} \bra{1''} \Big)\\ 
=& 
\left( e^{-i\theta/2} \ket{0'} \bra{0'} + e^{i\theta/2} \ket{1'} \bra{1'} \right) \left( e^{i\theta/2} \ket{0''} \bra{0''} + e^{-i\theta/2} \ket{1''} \bra{1''} \right)\nonumber
\end{align}
where $\{ \ket{0'}, \ket{1'}\}$ and $\{\ket{0''}, \ket{1''}\}$ are arbitrary orthonormal bases.%
\footnote{
The square root function $e^{i\theta} \mapsto e^{i\theta/2}$ has a branch cut, 
but it hardly matters to us as long as we are consistent that $e^{-i\theta/2}$ denotes the inverse of the square root.
}
When we alternate an even number of control-$W$ and control-$W^\dagger$,
this trick allows us to assume that an implementable unitary on the control qubit is
a product of {\bf primitive matrices}
\begin{align}
    E_P(t) = t P + t^{-1} (I-P) = tP + t^{-1}Q  \label{eq:primitive}
\end{align}
where $t = e^{i\theta/2}$ and $P = I - Q$ is a projector of rank 1.
Thus, an even number~$n$ of control-$W$ and control-$W^\dagger$, together with an extra unitary $E_0$ independent of $t$,
induces
\begin{align}
    F(t) = E_0 E_{P_1}(t) E_{P_2}(t) \cdots E_{P_{n}}(t)
\label{eq:FEE}
\end{align}
on the control qubit.
The product $F(t)$ can be thought of as an $SU(2)$-valued function over the unit circle in the complex plane.
By the same formulas, we define $E_P(t)$ and $F(t)$ on the entire complex plane except the origin $t=0$.
Now if $\bra + F(t) \ket +$ is close to $f(t^2 = e^{i\theta})$ for all $t \in U(1)$,
then post-selection on $\ket +$ of the control qubit enacts $V$.
Here, the choice of $\ket +$ is a convention, as it can be any other state due to $E_0$.
The success probability of the post-selection depends on the magnitude of $f(e^{i\theta})$.
A natural question is then 
what $F(t)$ is achievable in the form of \cref{eq:FEE}.
Note that it makes no difference to insert many unitaries that are independent of $t$ 
in between $E_{P_j}(t)$'s,
rather than a single $E_0$ at the front,
because $U E_P(t) U^\dagger = E_{UPU^\dagger}(t)$ for any unitary $U$.
The answer to the achievability question turns out to be quite simple, as we show in the next section.

\section{Polynomial functions $U(1) \to SU(2)$}

\begin{definition}
For any integer $n \ge 0$, let $\LP_n$ be the set of all Laurent polynomials 
$F(t) = \sum_{j = -n}^n C_j t^j$
in $t$ with coefficients $C_j$ in 2-by-2 complex matrices,
such that $F(\eta) \in SU(2)$ for all complex numbers $\eta$ of unit modulus.
We say that $F(t) \in \LP_n$ has {\bf degree} $n$ if $C_n \neq 0$ or $C_{-n} \neq 0$.
We define $\EP_n$ to be the subset of $\LP_n$ consisting of 
all $F(t)$ where the exponents of $t$ in $F(t)$ belong to $\{-n, -n+2, -n+4, \ldots, n-2, n\}$.
Note that $\LP_0 = \EP_0 = SU(2)$, and for any orthogonal projector $P$ we have $E_P(t) \in \EP_1$.
We define $F^\dagger(t)$ to be $\sum_j C_j^\dagger t^j$.%
\footnote{
For an indeterminant $t$, we do \emph{not} define $(F(t))^\dagger$.
For a unimodular complex number $\eta$, 
the hermitian conjugate of the unitary $F(\eta)$ is
$(F(\eta))^\dagger = F^\dagger(\bar \eta) = F^\dagger (1/\eta)$
which is \emph{not} equal in general to $F^\dagger(\eta)$.
}
In a set-theoretic notation, the definitions are as the following.
\begin{align}
\LP_n &= \left\{ 
F(t) =\sum_{ j = -n}^n C_j t^j \in \mathrm{Mat}(2;\mathbb C)[t,t^{-1}] ~\middle|~ \forall \eta \in \mathbb C, |\eta| = 1 \Rightarrow F(\eta) \in SU(2) 
\right\}\\
\EP_n &=  \left\{
F(t) = \sum_{ j = -n}^n C_j t^j  \in \LP_n ~\middle|~ \forall k \in 2\mathbb Z + 1, C_{-n+k} = 0 
\right\}
\end{align}
\end{definition}

Note that for any $F(t) \in \LP_n$, we have $F(t)F^\dagger(1/t) = F^\dagger(1/t)F(t) = I$;
this is true for every $t$ on the unit circle, an infinite set,
and any (Laurent) polynomial is determined by its values on an infinite set.

\begin{theorem}\label{thm:composition}
Any $n$-fold product $E_{P_1}(t) \cdots E_{P_n}(t)$ belongs to $\EP_n$.
Conversely, every $F(t) \in \EP_n$ of degree $n$ has a unique decomposition into primitive matrices and a unitary,
as in \cref{eq:FEE}.
If $F(t) = C_{-n}t^{-n} + \cdots + C_n t^n$, then $P_n = C_n^\dagger C_n / \Tr( C_n^\dagger C_n )$.
\end{theorem}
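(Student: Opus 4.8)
Below is the route I would take.

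The plan is to prove the two assertions separately. The forward inclusion $E_{P_1}(t)\cdots E_{P_n}(t)\in\EP_n$ is bookkeeping: in an eigenbasis of the rank-one projector $P$ the matrix $E_P(\eta)=\eta P+\eta^{-1}Q$ is the diagonal matrix $\mathrm{diag}(\eta,\eta^{-1})\in SU(2)$ for every unimodular $\eta$, and the only exponents of $t$ occurring in $E_P(t)$ are $\pm1$; hence a product of $n$ such factors has every exponent equal to a sum of $n$ terms each $\pm1$, i.e.\ in $\{-n,-n+2,\dots,n\}$, and on $U(1)$ it is a product of $SU(2)$ matrices, hence again in $SU(2)$. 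The converse I would prove by induction on $n$, peeling a single primitive matrix off the right end.

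The heart of the induction step is the structure of the extreme coefficients of $F(t)=\sum_{j=-n}^n C_j t^j\in\EP_n$ of degree $n\ge1$. Since $F$ agrees with an $SU(2)$-valued function on the infinite set $U(1)$, both $\det F(t)\equiv1$ and $F(t)F^\dagger(1/t)=I$ hold as Laurent-polynomial identities (the latter is already observed in the text). Reading off the coefficient of $t^{2n}$ in $\det F(t)=1$ gives $\det C_n=0$, and that of $t^{2n}$ in $F(t)F^\dagger(1/t)=I$ gives $C_nC_{-n}^\dagger=0$; symmetrically $\det C_{-n}=0$ and $C_{-n}C_n^\dagger=0$. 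Equivalently, parametrizing $F(\eta)$ by its first row $(a(\eta),b(\eta))$ with $\lvert a\rvert^2+\lvert b\rvert^2\equiv1$ shows that $C_n$ and $C_{-n}$ are assembled from the same four scalars $a_{\pm n},b_{\pm n}$, so they vanish together; since the degree is $n$, neither vanishes, so each has rank exactly one, and the relations above force their one-dimensional row spaces to be orthogonal. Thus $P_n:=C_n^\dagger C_n/\Tr(C_n^\dagger C_n)$ is the orthogonal rank-one projector onto the row space of $C_n$ and satisfies $C_nP_n=C_n$ and $C_{-n}P_n=0$; this already gives the last sentence of the theorem.

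Now set $G(t):=F(t)\,E_{P_n}(t)^{-1}=F(t)\bigl(t^{-1}P_n+t(I-P_n)\bigr)$. Its coefficient of $t^{n+1}$ is $C_n(I-P_n)=0$ and its coefficient of $t^{-n-1}$ is $C_{-n}P_n=0$, so its exponents lie in $\{-(n-1),\dots,n-1\}$ with the parity of $n-1$, while $G(\eta)=F(\eta)E_{P_n}(\eta)^{-1}\in SU(2)$ on $U(1)$; hence $G\in\EP_{n-1}$. Because $F=G\,E_{P_n}(t)$ has degree exactly $n$, $G$ has degree exactly $n-1$, so the induction hypothesis applies and gives $G=E_0E_{P_1}(t)\cdots E_{P_{n-1}}(t)$, whence $F=E_0E_{P_1}(t)\cdots E_{P_n}(t)$; the base case is $\EP_0=SU(2)$ with $E_0=F$. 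For uniqueness, in any factorization $F=E_0E_{P_1}(t)\cdots E_{P_n}(t)$ the coefficient of $t^n$ is $E_0P_1\cdots P_n$, so $C_n^\dagger C_n=P_n\cdots P_1P_1\cdots P_n$, a nonnegative multiple of the rank-one $P_n$; this pins down $P_n=C_n^\dagger C_n/\Tr(C_n^\dagger C_n)$, hence $G=F\,E_{P_n}(t)^{-1}$, and induction closes the argument.

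The step I expect to be the real obstacle is the structural claim about $C_{\pm n}$: that both are nonzero, rank one, with orthogonal row spaces, so that exactly one primitive factor can be removed without raising the top exponent or lowering the bottom one. It is short once the $SU(2)$ parametrization is in place, but it is the only place where the $SU(2)$ (not merely unitary) constraint is genuinely used, and one must track the exponent and parity bookkeeping carefully to be sure that the quotient $G$ lands in $\EP_{n-1}$ and has the precise degree $n-1$ that the induction needs.
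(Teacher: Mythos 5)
Your proposal is correct and follows essentially the same route as the paper: induction peeling one primitive factor off the top, using $\det C_{\pm n}=0$ and $C_n C_{-n}^\dagger=0=C_{-n}^\dagger C_n$ (from $\det F(t)=1$ and $F(t)F^\dagger(1/t)=I$ holding as Laurent-polynomial identities) to identify the unique rank-one projector $P_n\propto C_n^\dagger C_n$, your $G(t)=F(t)E_{P_n}(t)^{-1}$ being exactly the paper's $F(t)E_K(t)$ with $K=I-P_n$. The only divergence is minor: you establish that $C_n$ and $C_{-n}$ vanish together directly from the first-row $SU(2)$ parametrization of $F(t)$, whereas the paper deduces it a posteriori from the product form $C_n=E_0P_1\cdots P_n$, $C_{-n}=E_0Q_1\cdots Q_n$; both arguments are valid.
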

This completely characterizes polynomial functions $U(1) \to SU(2)$ 
and covers all previous results on ``achievable functions'' in quantum signal processing~\cite{LowMethod,GSLW2018}.
Indeed, for any Laurent polynomial function $U(1) \ni z \mapsto F(z) \in SU(2)$,
the Laurent polynomial function $t \mapsto F(t^2)$ belongs to $\EP_n$ for some $n$ 
and has a unique product decomposition of the theorem.
Our version is slightly more general since previous results implicitly assume that $\Tr(P_j Z) = 0$.
\begin{proof}
The first statement is trivial by definition.
The proof of the converse is by induction in $n$ where the base case $n=0$ is trivial.
The induction step is proved as follows.
We are going to prove that
\emph{
for any $F(t) \in \EP_n$ of degree $n > 0$ there exists a unique $E_K(t)$ 
such that $F(t)E_K(t) \in \EP_{n-1}$.}%
\footnote{A reader might find it unusual that the degree of a polynomial is decreasing under multiplication,
but in the algebra of matrices two nonzero matrices may multiply to vanish.}

Consider $F(t) = \sum_{j=-n}^n C_j t^j$ as a 2-by-2 matrix of four Laurent polynomials.
The defining property $\det F(t) = 1$ holds for infinitely many values of $t$,
and therefore it should hold as a polynomial equation.
Taking the leading term, we have $t^{2n} \det C_n + \text{(lower order terms)} = 1$.
Similarly, taking the leading term in $t^{-1}$, 
we have $t^{-2n} \det C_{-n} + \text{(higer order terms)} = 1$.
Hence, 
\begin{align}
\det C_n = 0 = \det C_{-n}.
\end{align}
Similarly, from the equation $F^\dagger (1/t) F(t) = I = F(t) F^\dagger(1/t)$
we have $t^{2n} C^\dagger_{-n} C_n  + \mathcal O(t^{2n-1}) = I = t^{2n} C_{n} C_{-n}^\dagger + \mathcal O(t^{2n-1})$ 
and hence 
\begin{align}
C^\dagger_{-n} C_n = 0 = C_n C_{-n}^\dagger.
\end{align}

Since the degree of $F(t)$ is $n$, at least one of $C_n$ and $C_{-n}$ is nonzero.
Suppose $C_n \neq 0$.
Let $K$ be a rank-1 projector such that $C_n K = 0$, and let $L = I - K$.
Such $K$ is unique since $C_n$ is a two-by-two matrix of rank one;
the singular value decomposition of $C_n$ is $\ket a \bra b$ for some unnormalized vectors $\ket a, \ket b$,
and $K$ has to annihilate $\ket b$.
Then, we claim that $F(t) ( t K + t^{-1} L) \in \EP_{n-1} $.
Indeed, expanding the left-hand side we have
\begin{align}
    t^{-n-1} C_{-n}L  + t^{-n+1}(C_{-n}K + C_{-n+2}L) + \cdots + t^{n-1}(C_n L + C_{n-2}K) + t^{n+1} C_n K.
\end{align}
(This is the only place we use $\EP$ instead of $\LP$.)
If $C_{-n} = 0$, this implies the claim.
If $C_{-n} \neq 0$, then, considering the singular value decomposition of $C_{-n}$,
we have $K \propto C_{-n}^\dagger C_{-n}$ and therefore $C_{-n}L =0$,
implying the claim.
The case $C_{-n} \neq 0$ is completely parallel.

Actually, for any $F(t) \in \EP_n$ of degree $n$, 
$C_n \neq 0$ if and only if $C_{-n} \neq 0$:
$C_n$ is a product of $n$ rank-one operators 
$E_0 P_1 \cdots P_n = E_0 \ket {p_1} \braket{p_1|p_2} \cdots \braket{p_{n-1}|p_n}\bra{ p_n}$,
where $\braket{p_j | p_{j+1}} \neq 0$ for all $j$, 
and this implies $Q_j Q_{j+1} = (I-P_j)(I-P_{j+1}) \neq 0$ for all $j$,
which is to say that $C_{-n} = E_0 Q_1 \cdots Q_n \neq 0$.
\end{proof}

\subsection{Parity constraints} \label{sec:parity}

Any member of $SU(2)$ can be written as $a I + b i X + c i Y + d iZ$ 
where the real numbers $a,b,c,d$ satisfy $a^2 + b^2 + c^2 + d^2 = 1$,
and this decomposition is unique. (The group $SU(2)$ is identified with the group of all unit quaternions.)
Thus, a member $F(z) \in \LP_n$ can be written uniquely as $F(z) = a(z) I + b(z) iX + c(z) iY + d(z) iZ$.
Here, $a(z),b(z),c(z),d(z)$ are Laurent polynomials such that $a(z)^2+b(z)^2+c(z)^2+d(z)^2 = 1$, and each takes real values on $U(1)$.

Recall that under the standard representation of Pauli matrices,
$Z$ is diagonal, and $X,Y$ are off-diagonal.
Suppose an $SU(2)$-valued function 
$\theta \mapsto F(e^{i\theta}) = a(e^{i\theta})I + b(e^{i\theta}) iX + c(e^{i\theta}) iY + d(e^{i\theta}) iZ$
has even functions (reciprocal in $t = e^{i\theta}$) in the diagonal 
and odd (anti-reciprocal in $t = e^{i\theta}$) in the off-diagonal.
That is, $a(t) = a(1/t)$, $d(t)=d(1/t)$, $b(t) = -b(1/t)$, and $c(t) = -c(1/t)$.
We claim that if $F(t)$ is such an element of $\EP_n$,
then the primitive matrix $E_{P_n}(t)$ factored from $F(t)$ by \cref{thm:composition}
has a property that $\Tr(ZP_n) = 0$. 
Since for any projector $P = \frac12 (I + p_x X + p_y Y + p_z Z)$
where $(p_x,p_y,p_z)\in \RR^3$ has norm~1,
the primitive matrix $E_P(t)$ equals 
$t P + t^{-1}(I-P) = \frac{t + t^{-1}}{2} I + \frac{t-t^{-1}}{2}(p_x X + p_y Y + p_z Z)$,
the condition $\Tr(ZP) = 0$ is to say $p_z = 0$.
This implies that
$E_{P_n}(t)$ has reciprocal diagonal and anti-reciprocal off-diagonal.
To prove the claim we observe that
\begin{align}
Z 
&= F(t) F^\dagger(1/t) Z \nonumber \\
&= F(t) ( a(1/t) - b(1/t) iX - c(1/t) iY - d(1/t) iZ ) Z \nonumber \\
&= F(t) ( a(t) + b(t) iX + c(t) iY - d(t) iZ ) Z\\
&= F(t) Z ( a(t) - b(t) iX - c(t) iY - d(t) iZ ) \nonumber \\
&= F(t) Z F^\dagger(t). \nonumber
\end{align}
It follows that $0 = \Tr(Z) = \Tr( F(t) Z F^\dagger (t) )
= t^{2n} \Tr(C_n Z C_n^\dagger) + \cdots + t^{-2n} \Tr(C_{-n} Z C_{-n}^\dagger)$
as a polynomial in $t$,
and hence $\Tr(C_n^\dagger C_n Z) = 0 = \Tr(C_{-n}^\dagger C_{-n} Z)$ when $n > 0$.
The matrix $C_{\pm n}^\dagger C_{\pm n}$ is proportional to 
the projector $P_n$ or $I-P_n$ in the factor $E_{P_n}(t)$ as shown in \cref{thm:composition},
and therefore we have $\Tr(ZP_n) = 0$.

Moreover, it then follows that $F(t)E_{P_n}^\dagger(1/t)$ 
also has reciprocal diagonal and anti-reciprocal off-diagonals.
Therefore, the unique projectors $P_1,\ldots, P_{n}$ in the decomposition,
which define the primitive matrices $E_{P_j}(t)$,
have zero $Z$-component.
This means that all projectors $P_j$ are of form
\begin{align}
P_j = e^{i Z \phi_j /2} \ket + \bra + e^{ -i Z \phi_j /2}
\end{align}
where $\phi_j \in \mathbb R$ is some angle.
In fact, Ref.~\cite{LowMethod} exclusively considered $E_P(t)$ of this form.
This is contrasted to the general case where $P_j$ is identified with a point on the Bloch sphere.
The constraint $\Tr(Z P_j) = 0$ forces $P_j$ to lie on the equator of the Bloch sphere.

Note that $E_0$, the residual $SU(2)$ factor in the decomposition,
has generally nonzero $Z$-component.
Since $E_P(1) = I$ for any projector $P$,
we know $E_0 = F(1) = a(1) I + b(1) iX + c(1) iY + d(1) iZ$,
but $b(1) = c(1) = 0$ due to their anti-reciprocity.
This implies that $E_0 = e^{i Z \phi_0 / 2}$ for some angle $\phi_0$.
Hence, under the parity constraint of this subsection,
$F(t) \in \EP_n$ is uniquely specified by $n+1$ angles $\phi_0, \phi_1,\ldots,\phi_n$. 
Note that if $d(1)^2 = 1-a(1)^2 = \calO(\epsilon)$, then $d(1) = \calO(\sqrt{\epsilon})$ 
and $\norm{ E_0 - I } = \calO(\sqrt{\epsilon})$.
Hence, there would be quadratic loss in accuracy to omit $E_0$ (i.e., to set $\phi_0 = 0$),
even though $a(1) \approx 1$ suggests that one would not need $E_0$.

\subsection{Complementing polynomials}

Quantum signal processing does not use $F(t)$ itself, but rather a certain matrix element of it.
Hence, it is important to know what a matrix element can be.
Let us first introduce classes of Laurent polynomials.
\begin{definition}
    A (Laurent) polynomial with real $\RR$ coefficients is called a {\bf real} (Laurent) polynomial.
    The {\bf degree} of a Laurent polynomial is the maximum absolute value of the exponent of the variable whose coefficient is nonzero.
    A Laurent polynomial $f(z)$ is {\bf reciprocal} if $f(z) = f(1/z)$, or {\bf anti-reciprocal} if $f(z) = - f(1/z)$.
    A Laurent polynomial function $f: \CC \setminus \{0\} \to \CC$ is {\bf real-on-circle} if $f(z) \in \RR$ for all $z \in \CC$ of unit modulus.
    A real-on-circle Laurent polynomial $f(z)$ is {\bf pure} if $f(z)$ is real reciprocal 
    or $i f(z)$ is real anti-reciprocal.
\end{definition}
The term ``pure'' is because \emph{ 
a Laurent polynomial $f(z)$ with complex coefficients is  real-on-circle
if and only if both
\begin{align}
f_+(z) := \frac{f(z) + f(1/z)}{2} \text{ and } f_-(z) := \frac{f(z)-f(1/z)}{2i}
\end{align}
are real Laurent polynomials.
}
(\emph{Proof}: Write $f(e^{i\theta}) = \sum_j a_j e^{i j \theta}$, with the complex conjugate being $\sum_j \overline{a_j} e^{-ij\theta} = \sum_j \overline{a_{-j}} e^{ij\theta}$. Thus, $a_j = \overline{a_{-j}}$, and the claim follows.)
This simply rephrases the fact that a real-valued function $\theta \mapsto f(e^{i\theta})$ 
has a trigonometric function series with real coefficients.
Hence, for any real-on-circle Laurent polynomial $f(z)$, 
it is real if and only if it is reciprocal.
In addition, a real and reciprocal Laurent polynomial is real-on-circle.
That is, among the three properties, real, real-on-circle, and reciprocal, 
any two imply the third.
Note that a real-on-circle Laurent polynomial is not necessarily real,
and a real Laurent polynomial is not necessarily real-on-circle.
Also, note that real-on-circle Laurent polynomials form an algebra over the real numbers.

We are now ready to state a sufficient condition under which a complex polynomial qualifies 
to be a matrix element of some $F(t) \in \LP_n$.
We think of $a(z)$ and $b(z)$ below as the real and imaginary parts of a complex function, respectively.
A reader might want to compare the following lemma with the first paragraph of \cref{sec:parity}.
\begin{lemma}
Let $a(z)$ and $b(z)$ be real-on-circle Laurent polynomials of degree at most $n$
such that $a(\eta)^2 + b(\eta)^2 < 1$ for all $\eta \in U(1)$.
If $a(z)^2 + b(z)^2$ is reciprocal (e.g., $a(z)$ and $b(z)$ are pure),
then there exist pure real-on-circle Laurent polynomials 
$c(z) = c_+(z)$ and $d(z)=id_-(z)$ of degree at most $n$
such that $a(z)^2 + b(z)^2 + c(z)^2 + d(z)^2 = 1$.
\label{lem:complementing}
\end{lemma}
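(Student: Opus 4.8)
\emph{Proof proposal.} The plan is to reduce the four-square identity to a single scalar factorization $g(z) = h(z)h(1/z)$ and then to solve the latter by a Fej\'er--Riesz-type construction, with the strict hypothesis $a(\eta)^2+b(\eta)^2<1$ entering at two decisive points.

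\emph{First reduction.} Since real-on-circle Laurent polynomials form an algebra over $\RR$, the polynomial $a(z)^2+b(z)^2$ is real-on-circle; being also reciprocal by hypothesis, it is real. Set $g(z) := 1 - a(z)^2 - b(z)^2$, a real reciprocal Laurent polynomial of degree at most $2n$ with $g(\eta) > 0$ for every $\eta\in U(1)$. I claim it suffices to produce a \emph{real} Laurent polynomial $h(z)$ of degree at most $n$ with $h(z)h(1/z) = g(z)$ identically. Indeed, given such an $h$, put $c(z) := \tfrac12\big(h(z) + h(1/z)\big)$ and $d(z) := \tfrac{i}{2}\big(h(z) - h(1/z)\big)$. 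Then $c$ is real reciprocal, hence pure real-on-circle with $c=c_+$; and writing $d = i d_-$ with $d_-(z) = \tfrac12(h(z)-h(1/z))$ real anti-reciprocal, $d$ is real-on-circle (it is $i$ times a real anti-reciprocal polynomial) and pure. Both have degree at most $\deg h \le n$. Finally $c^2 + d^2 = c^2 - d_-^2 = (c - d_-)(c + d_-) = h(1/z)h(z) = g(z)$ as a polynomial identity, so $a^2 + b^2 + c^2 + d^2 = 1$.

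\emph{Second step: the factorization.} Let $m = \deg g \le 2n$ with top coefficient $g_m \neq 0$ (real). Multiplying by $z^{2n}$ turns $g$ into an ordinary polynomial $G(z) = z^{2n} g(z)$ with real coefficients satisfying $G(z) = z^{4n}G(1/z)$; thus its $2m$ nonzero roots are invariant under $\rho \mapsto 1/\rho$, and, $g$ being real, also under conjugation, while $g(\eta)>0$ on $U(1)$ forces none of them onto $U(1)$. Hence the set $R_{<}$ of roots with $|\rho| < 1$ has exactly $m$ elements (the involution $\rho\mapsto 1/\rho$ carries it bijectively onto the roots outside the disc), and it is conjugation-closed, so $P(z) := \prod_{\rho\in R_<}(z-\rho)$ is a real polynomial of degree $m$ with nonzero constant term. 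A short computation using $P(1/z) = z^{-m}\big(\prod_{\rho\in R_<}(-\rho)\big)\prod_{\rho\in R_<}(z - 1/\rho)$ together with $G(z) = g_m z^{2n-m}\prod_{\rho}(z-\rho)$ yields $P(z)P(1/z) = \big(\prod_{\rho\in R_<}(-\rho)\big)\, g(z)/g_m$. Evaluated on $U(1)$, the left side is $|P(\eta)|^2 > 0$ and $g(\eta) > 0$, so the scalar $\prod_{\rho\in R_<}(-\rho)/g_m$ is a positive real number; let $\gamma > 0$ be the square root of its reciprocal. Then $h(z) := \gamma\, z^{-\lfloor m/2\rfloor} P(z)$ is a real Laurent polynomial supported on $\{-\lfloor m/2\rfloor, \dots, \lceil m/2\rceil\}$, hence of degree $\lceil m/2\rceil \le n$, and $h(z)h(1/z) = \gamma^2 P(z)P(1/z) = g(z)$, as required.

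\emph{Where the difficulty lies.} The conceptual core is the root-pairing argument, which is standard once one passes to the polynomial picture; the two places where the \emph{strict} inequality is indispensable are that $G$ has no zeros on $U(1)$ (so that ``half'' of the roots, $R_<$, is well defined and $P$ is zero-free on the circle) and that the leftover scalar comes out positive, so that $\gamma$ is real. The one bookkeeping point that is easy to get wrong is the power of $z$ in $h$: taking $h=\gamma P$ would only give $\deg h\le m\le 2n$, and it is the recentering by $z^{-\lfloor m/2\rfloor}$ --- legitimate because that factor cancels in $h(z)h(1/z)$ --- that brings the degree down to $\lceil m/2\rceil\le n$. Degenerate cases ($g$ constant, or $z=0$ a root of $G$ when $\deg g<2n$) need no separate treatment, since only the nonzero roots of $G$ are ever used.
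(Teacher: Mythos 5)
Your proposal is correct and takes essentially the same route as the paper's proof: collect the roots of $1-a(z)^2-b(z)^2$ inside the unit disc (none lie on $U(1)$ by the strict bound), form a real polynomial from them, recenter by a power of $z$ to halve the degree, fix a positive scalar so that $h(z)h(1/z)=1-a^2-b^2$, and take the reciprocal and anti-reciprocal parts as $c$ and $d$. The only cosmetic difference is that you compute the normalizing constant explicitly from the leading coefficient and the product of the inner roots, whereas the paper pins it down by a same-roots/reciprocity argument and positivity at $z=1$.
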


The conditions in the lemma on reciprocity are
due to a technical reason in the proof.
If there were some other reason under which one can guarantee the existence of the complementing polynomials,
it would be possible to use them in the algorithm below,
and the scope of the input functions in our algorithm would be enlarged;
the existence of the complementing polynomials is more important than the reciprocity constraints.
However, we note that the reciprocity conditions are 
not severe restrictions since any periodic function is the sum of
an even and an odd function, and one can ``combine'' two functions 
by ``flexible'' quantum signal processing~\cite{Low2017flexible}.

\begin{proof}
The Laurent polynomial $1-a(z)^2-b(z)^2$ is reciprocal real of degree $n'$ that is at most $2n$;
the leading terms of $a(z)^2$ and $b(z)^2$ might cancel each other so that $n' < 2n$.
Due to the reciprocity, there are $2n'$ roots in total with multiplicity taken into account
and any root $r$ must come in a pair $(z,z^{-1})$
where one is inside the unit disk and the other outside the unit disk,
but neither is on the unit circle.
We collect all the roots inside the unit disk:
\begin{align}
\mathcal D = \left[ r \in \CC ~:~ 1 - a(r)^2 - b(r)^2 = 0, ~ |r| < 1 \right]. \label{eq:innerRoots}
\end{align}
This is a list rather than a set as we take the multiplicities into account;
$\mathcal D$ has exactly $n'$ elements.
Consider a factor of $1-a(z)^2-b(z)^2$:
\begin{align}
    e(z) &= z^{-\lfloor n'/2 \rfloor} \prod_{r \in \mathcal D} (z-r).\label{eq:ez}
\end{align}
The monomial in front of the product is to balance the greatest exponent of $z$ with the least exponent;
the degree of $e(z)$ is $\lceil n'/2 \rceil$.
The list $\mathcal D$ is closed under complex conjugation due to the reality of $1-a(z)^2-b(z)^2$,
and hence $e(z)$ is a real Laurent polynomial.

Then, the product $e(z) e(1/z)$ is real reciprocal and has degree $n'$.%
\footnote{
The degree of a Laurent polynomial in our definition
is only subadditive under multiplication of two Laurent polynomials.
For example, the product $(z-1)(z^{-1} -2)$ has degree one.
}
Now the two Laurent polynomials $e(z)e(1/z)$ and $1-a(z)^2-b(z)^2$ have the same roots.
Therefore they differ by a factor of $cz^k$ for some nonzero number $c$ and an integer $k$,
but the reciprocity fixes $k =0$ and the reality puts $c$ into $\RR$.
That is, 
\[
\alpha = \frac{1-a(z)^2-b(z)^2}{e(z)e(1/z)} \in \RR .
\]
Evaluating this expression at $z=1$, we see that $\alpha$ is positive.
Thus, we finish the proof by observing
\begin{align}
1-a(z)^2-b(z)^2 = \alpha e(z)e(1/z) 
= 
\left(\frac{e(z) + e(1/z)}{2} \sqrt \alpha \right)^2 + \left(\frac{e(z) - e(1/z)}{2i} \sqrt \alpha \right)^2 .
\label{eq:czdz}
\end{align}
Both the reciprocal ($c(z)$) and anti-reciprocal ($d(z)$) combinations 
have degree $\le \lceil n'/2 \rceil \le n$.
\end{proof}
Note that given $a(z)$ and $b(z)$ 
the complementing Laurent polynomials $c(z),d(z)$ are not unique in general.
As long as the joint of a conjugate-closed list $\mathcal D$ 
and its reciprocal $[ 1/r \in \CC ~:~r \in \mathcal D ]$
is the list of all the roots of $1-a(z)^2-b(z)^2$,
we can construct $c(z)$ and $d(z)$ satisfying the conditions in the lemma.

\section{Efficient implementation with bounded precision}

In this section we consider an algorithm to find interspersing single-qubit unitaries
given a complex function $A(e^{i\varphi}) + i B(e^{i\varphi})$.
The algorithm consists of two main parts:
first, we have to find an $SU(2)$-valued function of $\varphi$ 
such that a particular matrix element is the input function. It suffices to find a good approximation.
Second, we have to decompose the $SU(2)$-valued function into a product of primitive matrices.
We have already given constructive proofs for both the steps,
but we tailor the construction so that numerical error is reduced and traceable.
We will outline our algorithm first, deferring certain details to the next subsection.
The computational complexity will be analyzed subsequently.

\begin{enumerate}
\item[Input:]
A real parameter $\epsilon \in (0,\frac{1}{100})$,
a list of $2N+1$ complex numbers 
$\zeta_k$ ($k = -N, \ldots, N$)
specified using at most $\log_2(100 N / \epsilon)$ bits 
in the floating point representation,
and two bits $p_\text{re}$ and $p_\text{im}$.

Here, the list is the Fourier coefficients $\zeta_k$ for frequencies between $-N$ and $N$ 
of a complex-valued $2\pi$-periodic function 
$A(e^{i\varphi}) + i B(e^{i\varphi}) = \sum_{k = -N}^N \zeta_k e^{i k \varphi}$
subject to conditions that
(i) each of real-valued functions $A(e^{i\varphi})$ and $B(e^{i\varphi})$ 
has definite parity (even or odd parity as functions of $\varphi$),
recorded in the two bits $p_\text{re}, p_\text{im}$,
and (ii) $ A(e^{i\varphi})^2 + B(e^{i\varphi})^2 \le 1$ for any real $\varphi$.

\emph{
The function $\varphi \mapsto A(e^{i\varphi}) + i B(e^{i\varphi})$
must be sufficiently close to an ultimate target function,
where the latter is, strictly speaking, not a part of the input for us.
This approximation has nothing to do with the algorithm below,
but should be analyzed independently for each quantum signal processing problem.
}

\item[Output:]
A unitary $E_0 \in SU(2)$ and an ordered list of $2\times 2$ hermitian matrices
$P_1,\ldots,P_{2n}$ where $n \le N$.

Here, each $P_m$ is a (approximate) rank-one projector 
represented by $\Omega(\log(N/\epsilon))$ bits of precision
and defines the primitive matrix  
$E_m(t) = t P_m + t^{-1} (I - P_m)$.
When $t = e^{i \phi} \in U(1)$,
it holds that
$\left| A(t^2) + i B(t^2) - \bra + E_0 E_1(t) \cdots E_{2n}(t) \ket + \right| \le 30 \epsilon$.

\item[Time:]
The computational time complexity is $\calO(N^3 \polylog(N/\epsilon))$
on a random-access memory machine.

\end{enumerate}

Alternatively, an input may be a list of function values from which Fourier coefficients can be computed.
Having Fourier coefficients for frequencies between $-N$ and $N$
is equivalent to having a Laurent polynomial $A(z)+ iB(z)$ of degree at most $N$.

Our \cref{lem:complementing} would allow more general input functions 
where the real and imaginary parts are not necessarily of definite parity,
but we restrict our algorithm to inputs of definite parity,
since we require $A^2 + B^2$ to be of definite parity
which may not be satisfied after the rational approximation in Step~1 below 
if individual components are not of definite parity.
As mentioned before, this parity constraint is not too restrictive 
since quantum signal processing can be easily adopted to handle functions of indefinite parity~\cite{Low2017flexible}.

\subsection{Algorithm}

\begin{enumerate}

\item[1.] Compute rational real-on-circle Laurent polynomials $a(z), b(z)$ from 
$(1-10\epsilon)A(z), (1-10\epsilon)B(z)$ 
    by taking rational number approximation for each coefficient up to an additive error of $\epsilon / N$.
    If a coefficient is smaller than $\epsilon/N$ in magnitude, then it must be replaced by zero.
    The polynomials $a(z)$ and $b(z)$ are stored as lists of rational numbers (not floating point numbers).
    This step in general decreases the Laurent polynomial degree from $N$ to $n$
    since some small coefficients may be approximated by zero.
    
\item[2.] To additive accuracy $2^{-R}$ where $R \gtrsim 2 N \log_2 (N/\epsilon)$, 
	find all roots of $1- a(z)^2 - b(z)^2$.
	See \cref{eq:Rest} for a rigorous bound on $R$.
    The roots are stored as floating point numbers.
    From now on all real arithmetic will be performed using $R$-bit floating point numbers.

\item[3.] Evaluate the complementary polynomials computed from the roots of Step~2 
	according to \cref{lem:complementing} at points of $T$ where
    \begin{align}
    T := \{ e^{2\pi i k / D} ~|~ k = 1,\ldots, D \}
    \end{align}
    and $D$ is a power of 2 that is larger than $2n+1$.
    One should \emph{not} expand $c(z),d(z)$ before evaluation,
    but should substitute numerical values for $z$ with accuracy $2^{-R}$
    in the factorized form of $e(z)$ in \cref{eq:ez},
    and then read off the real ($c(z)$) and imaginary ($d(z)$) parts.
    
\item[4.] Set $F(z) = a(z)I + b(z)iX + c(z)iY + d(z)iZ$.
	Compute the discrete fast Fourier transform of the function value list to obtain
 	\begin{align}
  		C^{(2n)}_{2j} = \int_0^{2\pi} \frac{\rd \theta}{2\pi} ~ e^{-ij\theta} F(e^{i\theta})
	\end{align}
 	for $j = -n, -n+1,\ldots, n-1,n$.
 	(In exact arithmetic, we would have $F(z) = \sum_{j=-n}^n C^{(2n)}_{2j} z^j$.)
\item[5.] Set $F^{(2n)}(t) = \sum_{j=-n}^n C^{(2n)}_{2j} t^{2j}$.
For $m = 2n, 2n-1,\ldots,2, 1$ sequentially in decreasing order,
	(i) compute a primitive matrix $E_m(t)$ by
 	\begin{align}
     &E_m(t) = t P_m + t^{-1} (I-P_m) = t(I-Q_m) + t^{-1} Q_m\\
     &\text{ where }\quad P_m = \frac{C^{(m)\dagger}_{m} C^{(m)}_{m}}{\Tr( C^{(m)\dagger}_{m} C^{(m)}_{m})},\quad
     Q_m =\frac{C^{(m)\dagger}_{-m} C^{(m)}_{-m}}{\Tr( C^{(m)\dagger}_{-m} C^{(m)}_{-m})}, \nonumber
 	\end{align}
 	and (ii) compute the coefficient list of $F^{(m-1)}(t) = F^{(m)}(t) E^\dagger_m(1/t)$ by
 	\begin{align}
 	C^{(m-1)}_{k} = C^{(m)}_{k-1} Q_m + C^{(m)}_{k+1} P_m
 	\end{align}
 	where $k = -m+1, -m+3,\ldots,m-3,m-1$.
\item[6.] 
	Output $E_0 = C^{(0)}_0$ and $P_1, \ldots, P_{2n}$ using $\log_2(20N/\epsilon)$-bit floating numbers. Then,
	\begin{align}
		\bra + E_0 E_1(t) \cdots E_{2n}(t) \ket +
	\end{align} 
	is 
    $30\epsilon$-close to $A(t^2) + iB(t^2)$ for all $t \in U(1)$.
\end{enumerate}

\subsection{Further details and analysis}

\paragraph*{Step 1.}

Let the rational approximation be performed 
by truncating the binary expressions of real numbers.
To inherit parities, the rational approximation should be done only for terms with nonnegative powers of $z$,
from which we should infer the negative power terms.
Then, $a(z)$ and $b(z)$ satisfy 
\begin{enumerate}
\item[(i)] $a(z)^2 + b(z)^2$ is a real reciprocal Laurent polynomial,
\item[(ii)] $|a(z) + ib(z) - A(z) - i B(z) | \le 26 \epsilon$ for $z \in U(1)$,
\item[(iii)] $a(z)^2 + b(z)^2 \le 1 - \epsilon$ for $z \in U(1)$, and
\item[(iv)] every nonzero coefficient in $a(z)$ or $b(z)$ has magnitude $\ge \epsilon / N$.
\end{enumerate}
The first and the fourth conditions are clear by construction.
The second condition is because 
$|A(z) - a(z)| \le |A(z) - (1-10\epsilon)A(z)| + |(1-10\epsilon)A(z) - a(z)| \le 13\epsilon$
and similarly $|B(z) - b(z)| \le 13 \epsilon$.
The third is because $|a(z)+ib(z)| \le |(1-10\epsilon)(A(z)+iB(z)) - a(z)+ib(z)| + (1-10\epsilon) \le 1-\epsilon$.

The reason we speak of rational Laurent polynomials is mainly for the convenience of analysis,
as its evaluation can be made arbitrarily accurate
since the coefficients of $a(z)$ and $b(z)$ are exact;
each coefficient is stored as a rational number, a pair of integers, rather than a floating point number.
In a concrete implementation of our algorithm,
floating point numbers that are carefully handled may substitute rational numbers.
If $\mu = f \times 2^d$ is a real number where $\frac 1 2 \le f < 1$ and $d \in \mathbb Z$,
then the rational approximation of $\mu$ may be $\tilde \mu = 0.b_1 b_2 \cdots b_p \times 2^d$ for some $p$
where $b_j$ are bits in the binary representation of $f$.
Some care may be needed in order not to discard any bits of $\tilde \mu$ in arithmetic.
For example, when $\tilde \mu$ is added to another floating number of $p' > p$ bits of precision,
then $\tilde \mu$ should be mapped to a floating number of whatever needed bits of precision 
by padding zeros.
In practice, this should cause hardly any complication 
since most high precision arithmetic libraries
(e.g. The GNU Multiple Precision Arithmetic Library)
treat inputs as exact numbers,
but control the precision of the result according to other rules set by a user.
The number of bits to represent all the rational coefficients is $\calO( N \log(N/\epsilon))$.

\paragraph*{Step 2.}
There exists a root-finding algorithm with computational complexity $\tilde \calO(n^3 + n^2 R)$
under the assumption that all the roots have modulus at most 1~\cite{Pan1996}.
In our case, the rational Laurent polynomial $p(z) =1- a(z)^2 - b(z)^2$ does not satisfy the modulus condition;
however, this is a minor problem.
Every coefficient of $p(z)$ is the Fourier coefficient of the periodic function $p(e^{i\theta}) < 1$,
and hence is bounded by $1$.
By the condition~(iv) of Step~1, the leading coefficient of $p(z)$ is $\Omega((\epsilon/N)^2)$ in magnitude.
(The reason is as follows.
Since our rational approximation is by truncating binary expressions of real numbers,
the denominator of any coefficient is a power of 2 and is at most $2^{\lceil \log_2 (N / \epsilon) \rceil}$.
Hence, $4^{\lceil \log_2 (N / \epsilon) \rceil} p(z)$ has integer coefficients.)
Say the polynomial $p(z)= 1-a(z)^2 - b(z)^2$ has degree $n'$,
which may be less than $2n$ even if $a(z)$ and $b(z)$ have degree $n$.
Converting $p(z)$ into a monic polynomial $q(z)$ 
(after multiplying by $z^{n'}$ and a normalization factor),
we have $q(z) = z^{2n'} + \sum_{j=0}^{2n'-1} q_j z^j$ with $|q_j| \le \calO( (N/\epsilon)^2 )$.
Note that $q(z)$ has (exactly represented) rational coefficients.
If $q(z_0)=0$ with $|z_0| > 1$, then
\begin{align}
|z_0|^{2n'} \le \sum_{j=0}^{2n'-1} |q_j| |z_0|^j \le \calO(N^2 n'|z_0|^{2n'-1} / \epsilon^2)
\end{align}
implying $|z_0| \le \calO(N^3/\epsilon^2)$. (If $|z_0| \le 1$, this is trivial.)
We can use the algorithm of Ref.~\cite{Pan1996} after rescaling $z$ by a known factor.
The overhead due to the potential loss of precision from the rescaling is negligible 
since $R = \Omega(N \log (N/\epsilon))$.

\paragraph*{Step 3.}
We need to evaluate $e(z)$ of \cref{eq:ez} that is defined by the roots found in Step~2.
For $z \in U(1)$, the following \cref{lem:root-separation} guarantees that any root of 
$1-a(z)^2 - b(z)^2$ is
at least $\epsilon/(4 N^2)$-away 
(or $\Omega(1/N^2)$-away if $|1-A(z)^2-B(z)^2| = \calO(\epsilon)$)
from the unit circle.
In particular, with the prescribed accuracy $2^{-R}$
there is no numerical ambiguity to determine whether a root is inside the unit disk.
That is, the list $\mathcal D$ of \cref{eq:innerRoots} 
can be obviously computed under our bounded precision arithmetic.

Let us analyze the evaluation accuracy of $e(z)$ for $z \in U(1)$ more closely.
When evaluating a linear factor $z - r$ of $e(z)$
where both $z$ and $r$ are accurate up to additive error $2^{-R}$,
the number of lost significant bits is $\calO(\log(N/\epsilon))$
which is negligible compared to $R$.
The function value of $e(z)$ is thus evaluated accurately up to relative error 
$2^{-R + \calO(\log(N/\epsilon))}$.
Hence, the function value of $c(z)$ (the real part of $e(z)\sqrt{\alpha}$) 
and $d(z)$ (the imaginary part of $e(z)\sqrt{\alpha}$) by \cref{eq:czdz}
are determined up to additive error $2^{-R + \calO(\log(N/\epsilon))}$.

More concretely but still loosely, let us assume $24 N^2 2^{-R}\epsilon^{-1} \le 1/(16N)$.
The relative error of a linear factor $z-r$ is at most $2 \cdot 2^{-R} (4 N^2 / \epsilon)$.
The factor of $e(z)$ for the complex roots can be evaluated 
to relative error at most $3 \cdot 2 \cdot 2^{-R} (4 N^2 / \epsilon)$.
There are at most $N$ factors in $e(z)$,
so the value of $e(z)$ is determined up to relative error
$\delta = (1+ 24 N^2 2^{-R}\epsilon^{-1})^N - 1 \le 48 N^3 2^{-R}\epsilon^{-1} \le 1/8$.
This in turn gives an upper bound 
$200 N^3 2^{-R}\epsilon^{-1}$
on the additive error of the real part $c(z)$ and the imaginary part $d(z)$
since they have magnitude at most 1 on the unit circle.

\begin{lemma}\label{lem:root-separation}
If a real-on-circle Laurent polynomial $f(z)$ of degree $d \ge 1$ satisfies 
$0 < m \le f(z) \le M$ for all $z \in U(1)$,
then every zero of $f$ is at least $m/(4 M d^2)$-away from $U(1)$.
\end{lemma}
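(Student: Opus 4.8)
The plan is to bound how close a zero of $f$ can be to the unit circle by combining an upper bound on $|f'|$ on (and near) $U(1)$ with the lower bound $f\ge m$ there. First I would use the hypothesis that $f$ is a real-on-circle Laurent polynomial of degree $d$: writing $f(z)=\sum_{j=-d}^d a_j z^j$ with $a_j=\overline{a_{-j}}$, the function $g(\theta):=f(e^{i\theta})=\sum_j a_j e^{ij\theta}$ is a real trigonometric polynomial of degree $d$ with $0<m\le g(\theta)\le M$. The key classical input is a Bernstein-type inequality: a trigonometric polynomial of degree $d$ bounded by $M$ in sup-norm has derivative bounded by $dM$, i.e.\ $\max_\theta |g'(\theta)|\le d M$. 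Equivalently, on the circle itself, $|zf'(z)|\le dM$ for $|z|=1$, so $|f'(z)|\le dM$ there.

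Next I would control $f$ in a small annular neighborhood of $U(1)$ rather than just on it. Let $r$ be a zero with $\big||r|-1\big|=:\rho$; I want to show $\rho\ge m/(4Md^2)$. Consider the nearest point $z_0=r/|r|\in U(1)$. Along the radial segment from $z_0$ to $r$ I would estimate $|f(r)-f(z_0)|$ by the mean value / integral bound $\big|f(r)-f(z_0)\big|\le \rho\cdot \sup |f'|$ over that segment. Since $f(z_0)\ge m$ and $f(r)=0$, this forces $\rho\cdot \sup|f'|\ge m$. The remaining task is to bound $\sup|f'|$ on the segment, which lies in the annulus $\{\,\big||z|-1\big|\le\rho\,\}$; I can assume $\rho\le 1$ (otherwise the claimed bound $m/(4Md^2)\le 1$ is trivial since $m\le M$), so the segment sits inside $\{1/2\le|z|\le 2\}$ — actually I'll want a tighter bound. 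Here I would invoke a Bernstein–Walsh / maximum-modulus style estimate: for a Laurent polynomial of degree $d$, the growth off the circle is controlled, e.g.\ $\max_{|z|=\lambda}|f(z)|\le \lambda^d \max_{|z|=1}|f(z)|$ for $\lambda\ge1$ (and symmetrically $\lambda^{-(-d)}$ for $\lambda\le1$, using $a_j=\overline{a_{-j}}$), giving $|f(z)|\le (1+\rho)^d M\le e^{\rho d}M$ on the annulus, and then apply Bernstein on each circle $|z|=\lambda$ to get $|f'(z)|\le d\cdot(1+\rho)^d M/\lambda$ there, which for $\rho\le 1/(2d)$ is at most $2dM\cdot e^{1/2}\le 4dM$. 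Combining, $\rho\ge m/(4dM)$ when $\rho\le 1/(2d)$; and when $\rho>1/(2d)$ the bound $\rho>1/(2d)\ge m/(4Md^2)$ (using $m\le M$) holds trivially. Either way $\rho\ge m/(4Md^2)$.

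The main obstacle I anticipate is getting the constant exactly $4Md^2$ with a clean argument: the naive combination of "Bernstein on the circle gives $|f'|\le dM$" with "$\rho\cdot dM\ge m$" yields only $\rho\ge m/(dM)$, which is stronger, so the factor $d^2$ (rather than $d$) must be the price of handling the off-circle growth cheaply, and I should make sure the case split at $\rho\approx 1/d$ is the right one and that I haven't lost a factor. An alternative, possibly cleaner, route avoids the annulus entirely: apply the argument principle / Jensen-type reasoning, or simply note that if all zeros were within distance $\delta$ of $U(1)$ one could factor $f$ and bound $m=\min_{U(1)}|f|$ from below by a product of distances $\prod|z-r_k|$ against $M$, extracting per-factor a bound $\delta\gtrsim (m/M)^{1/(2d)}$ — but that gives the wrong (exponential) dependence, so the derivative approach above is the one to pursue. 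I would also double-check the degenerate cases: $f$ could be a genuine Laurent polynomial with a pole-like lowest term, but since $f$ is bounded on $U(1)$ and real-on-circle the symmetry $a_j=\overline{a_{-j}}$ makes $z^d f(z)$ an ordinary polynomial of degree $2d$, so all the polynomial estimates apply after that clearing of denominators, and I'd phrase the whole proof in terms of $z^d f(z)$ if that is cleaner.
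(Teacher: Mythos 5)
Your proof is correct, but it takes a somewhat different route from the paper's. Both arguments begin identically---take the nearest point $u = r/|r| \in U(1)$ to a root $r$, dispose of the trivial case where the distance exceeds roughly $1/(2d)$, and show $f$ cannot drop from $\ge m$ at $u$ to $0$ at $r$ over a short distance---but they control the drop differently. The paper expands $f$ in a Taylor series at $u$ and bounds \emph{every} derivative, $|f^{(k)}(u)| \le 2dM\, d(d+1)\cdots(d+k-1)$, directly from the crude coefficient bound $|a_j|\le M$ (the coefficients are averages of the function values on the circle), then sums the series; this is fully self-contained, and the extra factor of $d$ there is what produces the $d^2$ in the statement. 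You instead use only the first derivative, bounding $|f'|$ on nearby circles $|z|=\lambda$ by the classical Bernstein inequality combined with the maximum-modulus growth bound $\max_{|z|=\lambda}|f| \le \max(\lambda,1/\lambda)^{d} M$, and integrate along the radial segment; this imports two classical facts the paper avoids, but in exchange it yields the slightly stronger separation $m/(4Md)$, which of course implies the stated bound. Two small points to tidy: on the inner part of the annulus the growth factor is $(1-\rho)^{-d}$ rather than $(1+\rho)^{d}$ (your final estimate still goes through, e.g.\ $d(1-\rho)^{-(d+1)} \le 4d$ for $\rho\le 1/(2d)$, with equality of the constant only at $d=1$), and you should invoke Bernstein for complex-valued trigonometric polynomials (or reduce to the real case via $\Re(cf)$ for unimodular $c$), since $f(\lambda e^{i\theta})$ is real-valued only at $\lambda=1$.
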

\begin{proof}
Pick any root $z_0$ and choose the closest point $u \in U(1)$ so that $f(z_0 = u + \eta) = 0$.
We will lower bound the magnitude of $\eta$.
If $|\eta| \ge 1/(2d)$, then there is nothing to prove,
so we assume $|\eta| < 1/(2d)$.
Since $f$ is analytic except at $z=0$,
the Taylor series of $f$ at $u$ converges at $z_0 = u + \eta$.
\begin{align}
0 = f(u+\eta) = \sum_{k \ge 0} \frac{f^{(k)}(u)}{k!} \eta^k .\label{eq:fueta}
\end{align}
Let us estimate the magnitude of derivatives at $u \in U(1)$.
Since the coefficients $a_j$ of the polynomial
$f(u) = \sum_{j=-d}^d a_j u^j$ are the Fourier coefficients,
meaning that $a_j$ is a ``weighted'' average of the function values on the unit circle,
we know $|a_j| \le M$.
Thus
\begin{align}
|f^{(k)}(u)| \le 2d \cdot M \cdot d(d+1) \cdots (d+k-1);
\end{align}
there are $2d$ terms (or one more if $k=0$ in which case the inequality is true anyway)
and the maximum absolute value of the exponent increases by at most~1
every time we differentiate due to the negative degree term.
Therefore,
\begin{align}
m 
\le |f(u)| 
\le \sum_{k \ge 1} 2d M \binom{d+k-1}{k} |\eta|^k 
= 2dM \left( \frac{1}{(1-|\eta|)^d} - 1\right) 
\le 2 d M \cdot 2d |\eta|
\end{align}
where 
the first inequality is the assumption, 
the second inequality is by rearranging \cref{eq:fueta} and applying triangle inequality,
and in the last inequality we use 
the fact that $(1-x)^{-d} - 1$ is a convex increasing function of positive $x$ 
and valued at most 1 at $x = 1/(2d)$ for $d \ge 1$.
\end{proof}

\paragraph*{Step 4.}
This is essentially expanding the polynomials $c(z)$ and $d(z)$ found in the root finding step,
but we use the fast Fourier transform (FFT) for its better accuracy.
It has been shown~\cite{Ramos1971} that the FFT on a $k$-component input $F$ where each component 
(that is assumed to be a complex number in \cite{Ramos1971})
is accurate to relative error $\delta$, gives Fourier coefficients $\tilde{\hat F}_\omega$ with error
\begin{align}
\max_\omega | \tilde{ \hat{ F_\omega} } - \hat F_\omega | \le \calO( k^{-1/2} \log k ) \cdot \delta \sqrt{\frac{1}{k} \sum_\omega |\hat F_\omega|^2 }
\label{eq:FFT-bound}
\end{align}
where $\hat F_\omega = k^{-1} \sum_{\ell=1}^k e^{i\ell \omega/k} F(e^{i\ell/k})$
is the true Fourier spectrum. (Note the normalization factor $k^{-1}$ here.)
In our case, $F$ consists of $2$-by-$2$ matrices,
and \cref{eq:FFT-bound} also holds with Frobenius or operator norms in place of absolute values.
Since the input ``vector''~$F$ in this Step is a list of unitary matrices,
the root-mean-square factor is $\calO(1)$,
and the distinction between relative and absolute error is immaterial.
By the analysis of Step~3 above, $\delta$ is $2^{- R + \calO(\log(N/\epsilon))}$.
Thus, the (additive) error $\delta_{2n}$ in any Fourier coefficient $C^{(2n)}_{2j}$ 
is at most $2^{- R + \calO(\log(N/\epsilon))}$.
The error here is the operator norm of the difference 
between the computed and the true $C^{(2n)}_{2j}$.

Crude and concrete bounds can be obtained more directly 
(without using \cref{eq:FFT-bound}):
The coefficients of $a(z)$ and $b(z)$ are exactly known.
Those of $c(z)$ and $d(z)$ are computed from the value table of $e(z)$ from the previous Step,
which has entry-wise additive error $\delta \le 48 N^3 2^{-R} \epsilon^{-1}$.
The (slow) discrete Fourier transform on a $k$-component vector $v$
is the matrix multiplication by $V = k^{-1/2} U$ for a $k \times k$ unitary $U$.
(Hence, $\| V \| \le k^{-1/2}$.)
If we compute the input-independent trigonometric factors in the FFT,
often called the twiddle factors,
accurately up to additive error $2^{-R}$,
then $k^{-1/2}U$ is accurate up to additive error $\delta_{FFT} \le k^{1/2} 2^{-R}$ in operator norm.
(This bound is loose compared to that in the previous paragraph,
since in this paragraph we do not consider the fast Fourier transform 
that is numerically more stable.)
Since $k \le 2N+1$, the conversion from $\ell_\infty$-norm to $2$-norm incurs a factor of at most $\sqrt{2N+1}$.
Hence, the additive error $\delta_{2n}$ in $C^{(2n)}_{2j}$ for any $j$ is at most
$
\| \tilde V \tilde v - V v \|_\text{max}
\le
\| \tilde V \tilde v - V v \|_2 
\le 
\| \tilde V - V \| \cdot \| \tilde v \|_2 + \| V \| \cdot \| \tilde v - v\|_2 
\le
\delta_{FFT} \sqrt{2N+1} + \delta$.
That is, 
\begin{align}
\delta_{2n} \le 400 N^{3} \epsilon^{-1} 2^{-R}. \label{eq:delta2n}
\end{align}

\paragraph*{Step~5.}
This is an implementation of \cref{thm:composition}.
Thanks to the condition~(iv) of Step~1, we know
\begin{align}
\norm{ C^{(2n)}_{\pm 2n} } \ge \epsilon / N.
\end{align}

Put $F(t^2) = E_0 E_{1}(t) \cdots E_{2n}(t)$.
Then, for any $m = 1,2,\ldots,2n$, we observe that $C^{(m)}_{\pm m}$ is the product
\begin{align}
    C^{(m)}_m = E_0 P_1 P_2 \cdots P_m, \qquad C^{(m)}_{-m} = E_0 Q_1 Q_2 \cdots Q_m .
\end{align}
In particular, by the submultiplicative rule of operator norm we have
\begin{align}
   \norm{ C^{(m)}_{\pm m} } \ge \norm{ C^{(2n)}_{\pm 2n} } \ge \epsilon / N;
\end{align}
that is, the leading coefficients never become smaller in norm through the loop over $m$.

Let $\delta_{m}$ be the maximum additive error of $C^{(m)}_k$ for any $k$.
We assume that $\delta_{m} \le \epsilon / (2N)$.
For brevity, let $C = C^{(m)}_{\pm m}$,
and let $\tilde C$
be the approximate $C$ due to numerical error.
Then, $\delta_m \le \epsilon / (2N) \le  \norm{C} / 2$
and $\norm{C}/2 \le \norm{C} - \delta_m \le \norm{\tilde C} \le \norm{C} + \delta_m \le 2\norm{C}$. Now,
\begin{align}
\norm{\tilde C^\dagger \tilde C - C^\dagger C}
&\le \norm{ \tilde C^\dagger } \norm{\tilde C - C} + \norm{\tilde C^\dagger - C^\dagger} \norm{C} \nonumber \\
&\le 2\norm{C} \cdot \delta_m + \delta_m \cdot \norm{C} \\
&= 3 \norm{C} \delta_m\nonumber\\
\abs{\Tr(\tilde C^\dagger \tilde C) - \Tr(C^\dagger C)} 
&\le \sum_{j=0}^1 \abs{\bra j \tilde C^\dagger \tilde C - C^\dagger C \ket j}
\le 6 \norm{C} \delta_m \\
\Tr(C ^\dagger C) 
&\ge \norm{C}^2 \\
\Tr(\tilde C^\dagger \tilde C) 
&\ge \norm{\tilde C}^2 \ge \frac{1}{4} \norm{C}^2
\end{align}
Hence, we see that
the additive error $\beta_m$ in $P^{(m)}$ (or $Q^{(m)}$) is
\begin{align}
\norm{\tilde P^{(m)} - P^{(m)}}
&\le
 \frac{\norm{ \tilde C^\dagger \tilde C - C^\dagger C }}{\Tr(\tilde C^\dagger \tilde C)} 
   +
 \frac{\norm{ C^\dagger C }\abs{\Tr(C^\dagger C)-\Tr(\tilde C^\dagger \tilde C)}}{\Tr(\tilde C^\dagger \tilde C)\Tr(C^\dagger C)} 
  \nonumber\\
&\le
 \frac{4}{\norm{C}^2} \cdot 3 \norm{C}\delta_m + \frac{4}{\norm{C}^2}\cdot \frac{1}{\norm{C}^2} \cdot \norm{C}^2 \cdot 6 \norm{C}\delta_m \nonumber\\
&\le
36 N \epsilon^{-1} \delta_m.
\end{align}
In turn, assuming $\beta_m \le 1$ we have 
\begin{align}
\delta_{m-1} 
&=  
\max_k
\norm{\tilde C^{(m)}_{k-1} \tilde Q_m + \tilde C^{(m)}_{k+1} \tilde P_m
-
C^{(m)}_{k-1} Q_m - C^{(m)}_{k+1} P_m}\nonumber\\
&\le 
\max_k 
\norm{\tilde C^{(m)}_{k-1} \tilde Q_m - C^{(m)}_{k-1} \tilde Q_m}
+ \norm{C^{(m)}_{k-1} \tilde Q_m - C^{(m)}_{k-1} Q_m}
+ \text{(similar terms)} \nonumber\\
&\le
2(2 \delta_m + \beta_m) \nonumber\\
&\le
76 N \epsilon^{-1} \delta_m. \label{eq:deltaMminus1}
\end{align}
Combining \cref{eq:delta2n,eq:deltaMminus1} we conclude that
\begin{align}
\beta_0 := \delta_0 \le  400 N^{3} \epsilon^{-1} (76 N \epsilon^{-1})^{2N} 2^{-R} =: \Gamma.
\end{align}

\paragraph*{Step~6.} 
Now we have approximate $E_0$ up to additive error $\beta_0$ 
and $P_m$ up to $\beta_m$ ($m=1,\ldots,2n$).
The evaluation of $E_m(t)$ is then accurate up to $2\beta_m$ for $t \in U(1)$,
and that of $a(t^2) + i b(t^2) = \bra + E_0 E_1(t) \cdots E_{2n}(t) \ket +$
is accurate up to (suppressing $t$ for brevity) 
\begin{align}
\norm{\tilde E_0 \cdots \tilde E_{2n} - E_0 \cdots E_{2n}}
&\le
\sum_{m=0}^{2n} \norm{ E_0 \cdots E_{m-1} \tilde E_{m} \cdots \tilde E_{2n} - 
E_0 \cdots E_{m} \tilde E_{m+1} \cdots \tilde E_{2n}} \nonumber \\
&\leq
\sum_{m=0}^{2n}\left( 2\beta_m \prod_{\ell=m+1}^{2n}(1+2\beta_\ell) \right).
\end{align}
We want this to be smaller than $\epsilon$.
A sufficient condition is thus
\begin{align}
(2N+1)2\Gamma\left(1 + 2\Gamma \right)^{2N+1} &\le \epsilon \label{eq:Rest}\\
\text{or }\quad R &\gtrsim 2N\log_2(N/\epsilon).
\end{align}
All the (ad hoc) assumptions in the course of error estimation above,
are satisfied by this choice of $R$.

The correctness of the algorithm is clear from the construction and error analysis above.
The final error guarantee is from the condition~(ii) of Step~1 and \cref{eq:Rest}.

\subsection{Computational complexity}

All arithmetic in the algorithm above operates with at most $R$-bit numbers, where $R$ is chosen to be
$\Theta(N \log (N/\epsilon))$.
The number of elementary bit operations $(AND,OR,NOT)$
to perform one basic arithmetic operation ($+,-,\times,/$) on $u$-bit numbers
is upper bounded by $\calO(u~ \polylog(u))$~\cite{HarveyHoeven2018}.%
\footnote{This reference is concerned with multiplications,
but the division has essentially the same cost using the Newton's method~\cite[4.3.3~R]{Knuth}.}
Let us count the number of arithmetic operations.

\emph{Step~1}: Assuming that the coefficients of the true function $A(z)$ and $B(z)$ are given to accuracy $\calO(\epsilon/N)$,
it takes $\calO(N \log(N/\epsilon))$ arithmetic operations to find rational approximations.
There could be $\calO(\polylog(N/\epsilon))$ additional cost in full complexity in simplifying
rational numbers by Euclid's algorithm.

\emph{Step~2}: The root finding takes time $\calO(N^3+N^2 R ) \le \calO( N^3 \log(N/\epsilon))$~\cite{Pan1996};
this count includes all the cost of bit operations for high precision arithmetic.

\emph{Step~3}:
Selecting roots inside the unit disk requires $\calO(N)$ absolute value evaluations and $\calO(N)$ comparisons.
The polynomial function evaluation involves $\calO(N)$ arithmetic operations.
We need $\calO(N)$ values, resulting in $\calO(N^2)$ arithmetic operations.

\emph{Step~4}: The FFT requires $\calO(N \log N)$ arithmetic operations 
given $\calO(\log N)$ trigonometric function values,
which can be computed by Taylor expansions of order $R$, invoking $\calO(R \log N)$ arithmetic operations.
These result in arithmetic complexity $\calO( N \log(N) \log (N/\epsilon))$ for the FFT.

\emph{Step~5}: Updating the Fourier coefficient list $C^{(m)}_k$ involves $\calO(N)$ arithmetic operations,
which we do for $\calO(N)$ times, resulting in $\calO(N^2)$ arithmetic operations.

Overall, the computational complexity is $\calO( N^3 ~\polylog(N/\epsilon))$,
under the random-access memory model of computation.

Practically, it may be useful to 
run the algorithm with arithmetic precision with, say, $R_0 = 64$ bits initially,
test the final decomposition on $\calO(N)$-th roots of unity
(which takes only $\calO(N^2)$ operations with $\calO(\log(N/\epsilon))$-bit arithmetic),
and repeat the whole process under an exponential scheduling 
on the number $R_{r+1} = 2 R_r$ of bits of precision in the arithmetic,
until the test reveals that the answer is acceptable.
In this way the arithmetic uses no more than twice the number of bits of precision
that is actually needed to handle the numerical instability of our algorithm for a given input,
without knowing a tailored number beforehand.
For the upper bound $R = \calO( N \log (N/\epsilon))$ in the worst case,
there will be at most $r_\text{max} = \calO(\log N + \log \log (N/\epsilon))$ rounds,
but the overall time complexity is a constant multiple of the last round's
due to the exponential scheduling.

\section{Application to Hamiltonian simulation}
\label{sec:JacobiAnger}

In this section, we review and redo existing analysis~\cite{BCK2015,QSP}, 
complemented with a minor modification to our algorithm exploiting 
a certain structure of the eigenvalue transformation function.

Suppose we are given with a unitary $W$ whose eigenvalues $\mp e^{\pm i \theta_\lambda}$
are associated with those $\lambda$ of a Hermitian matrix $H = \sum \lambda \ket \lambda \bra \lambda $
with $\norm{H} \le 1$ as
\begin{align}
\sin \theta_\lambda = \lambda . \label{eq:sinthetalambda}
\end{align}
The correspondence between $W$ and $H$ might seem contrived at this stage,
but when $H$ is represented as a linear combination of unitaries~\cite{ChildsWiebe,BCCKS},
it is possible to construct such $W$ as a quantum circuit~\cite{LC16}.
The relation of \cref{eq:sinthetalambda} is in fact common 
whenever quantum walk is used~\cite{Childs2008,BerryChilds2009}.
(See \cref{app:jordan} for some detail.)
So, the desired transformation is
\begin{align}
f : \mp e^{\pm i \theta_\lambda} \mapsto e^{-i \tau \sin \theta_\lambda} \approx \bra + F(e^{i\theta_\lambda/2}) \ket +
\end{align}
where $F(t)$ should be constructed by quantum signal processing~\cite{QSP}.
This will implement $e^{-i \tau H}$.
Since the product of $n$ primitive matrices yields a Fourier component of frequency at most $n/2$,
$F(t)$ must consist of at least $2\tau$ factors.
(The factor of 2 is due to the half-angle in the argument of $F$.)
Note that the success probability of the post-selection is close to $1$
since $|f(e^{i\theta_\lambda})|=1$.

With $e^{i\varphi} = z$, we write
\begin{align}
\exp(i \tau \sin \varphi) = \exp\left( \tau \frac{z-z^{-1}}{2} \right) 
= \sum_{k \in \mathbb Z} J_k(\tau) z^k 
\label{eq:JacobiAnger}
\end{align}
where $J_k$ are the Bessel functions of the first kind;
one can take \cref{eq:JacobiAnger} as a \emph{definition} of the Bessel functions.
This is the Fourier series of $\varphi \mapsto \exp(i \tau \sin \varphi)$.
The substitution $\tau \to -\tau$ and $z \to -z$ together with the uniqueness of the Fourier series
implies $J_k(-\tau) = (-1)^k J_k(\tau)$.
Similarly, the substitution $z \to -1/z$ implies $J_{-k}(\tau) = (-1)^k J_k(\tau)$.
We separate the reciprocal and anti-reciprocal parts of the expansion as
\begin{align}
\exp\left( \tau \frac{z-z^{-1}}{2} \right) 
=
\underbrace{\sum_{k \in 2\ZZ} J_{k}(\tau) \frac{z^k + z^{-k}}{2}}_{A(z)} 
~ + ~ i 
\underbrace{\sum_{k \in 2\ZZ+1} J_{k}(\tau) \frac{z^k - z^{-k}}{2i} }_{B(z)}.
\end{align}
This expansion, called the Jacobi-Anger expansion, converges absolutely at a superexponential rate.
We can use the steepest descent method~\cite{Boyd1994}
which is generally applicable.
Expressing the Fourier transform as a contour integral we see
\[
J_k(\tau) = \frac{1}{2\pi i} \int_{C} \frac{\rd z}{z^{k+1}}  e^{(\tau/2)(z - z^{-1})}
\]
where $C$ is the unit circle.
Since the integrand is analytic except for $z=0$,
we may deform~$C$.
For $2k > \tau > 0$,
$z \approx 2k/\tau > 1$ is a saddle point for the absolute value of the integrand.
We take $C$ to be a circle through this point, to have that 
\begin{align}
|J_k(\tau)| &\le \left(\frac \tau {2k}\right)^k \int_0^{2\pi} \frac{\rd \theta}{2\pi} \exp[ (k - (\tau^2/4k))\cos\theta ]
\le \left( \frac{e\tau}{2k}\right)^k  \quad \text{ for } 2k > \tau > 0,  \nonumber \\
|J_k(\tau)| &\le \left( \frac{e|\tau|}{2|k|}\right)^{|k|} \quad \text{ for } k \in \ZZ \setminus \{0\}, \tau \in \RR. \label{eq:JAbound}
\end{align}
It is important that the convergence of the series depends
on the size of the region on which the function is analytic
--- this is a general fact~\cite{Boyd1994}.
For the Jacobi-Anger expansion the function is almost entire,
and the convergence is superexponential.
Note that \cite[9.1.62]{AS1964} asserts $|J_k(\tau)| \le |\tau/2|^{|k|} / |k| !$ 
for any $\tau \in \mathbb R$ and $k \in \mathbb Z$
which is tighter than \cref{eq:JAbound}.

Now, a partial sum of the Jacobi-Anger expansion can be written as
\begin{align}
\sum_{k : |k| \le N} J_k(\tau) z^k
&=
\underbrace{J_0(\tau) + \sum_{\text{even } k : 2 \le k \le N} J_{k}(\tau) (z^k + z^{-k}) }_{\tilde A(z)} 
~+~i 
\underbrace{\sum_{\text{odd } k : 1 \le k \le N} J_{k}(\tau) \frac{z^k - z^{-k}}{i}}_{\tilde B(z)}.
\end{align}
This is $\epsilon$-close to the full expansion if $N = \Omega( |\tau| + \log(1/\epsilon))$
by \cref{eq:JAbound} for any $z \in U(1)$.%
\footnote{The proof of this is along the same lines as proving 
the convergence of Taylor series for e.g.~the exponential function,
and is left to the reader.}
Numerical experiments suggest that the bound is quite tight and it suffices to choose
\begin{align}
N \approx \frac{e}{2} |\tau| + \ln (1/\epsilon) \approx 1.36 |\tau| + 2.30 \log_{10}(1/\epsilon).
\end{align}

The Laurent polynomials $\tilde A(z)$ and $\tilde B(z)$ are pure real-on-circle.
Applying our algorithm, we obtain real-on-circle Laurent polynomials 
$a(z) = a_+(z), b(z) = i b_-(z), c(z)= i c_-(z), d(z) = d_+(z)$.
The pure Laurent polynomials $c(z), d(z)$ are calculated by \cref{lem:complementing} 
where we choose $c(z)$ to be anti-reciprocal and $d(z)$ reciprocal.
(This choice is to have our results in the same convention as those of Ref.~\cite{LowMethod}.)
Note that every exponent of $z$ of the polynomial $1-a(z)^2-b(z)^2$ here, whose roots must be computed,
is even since $a(z)$ has only even exponents and $b(z)$ has only odd exponents,
so it is always better to feed a Laurent polynomial $g$ of degree $n$, instead of $2n$, where 
\begin{align}
g(z^2) = 1-a(z)^2 - b(z)^2
\end{align}
into the root finding routine.
Given the expanded form of $1-a(z)^2 - b(z)^2$, it takes no effort to find $g$.
In this case, the intermediate polynomial $e(z)$ in \cref{eq:ez} of \cref{lem:complementing} is
\begin{align}
e(z) = \prod_{c \in \mathbb C ~:~ g(c) = 0,\, |c| < 1 } \left( z - \frac{c}{z} \right).
\end{align}

We have implemented our algorithm with constants chosen as above
using Wolfram Mathematica~11,
and measured the running time as a function of $\tau$ for two fixed values of~$\epsilon$.
The result is shown in \cref{fig:ja}.
The running time scales asymptotically as the cubic of~$\tau$ as expected.
We used internal routines of Mathematica for the rational approximation, 
high precision arithmetic, root finding, and Fourier transform.
The computing was by Microsoft Surface Book with Intel Core i7-6600U at 2.6~GHz and 16~GB of RAM.

\begin{figure}[h]
    \centering
    \includegraphics[width=0.8\textwidth]{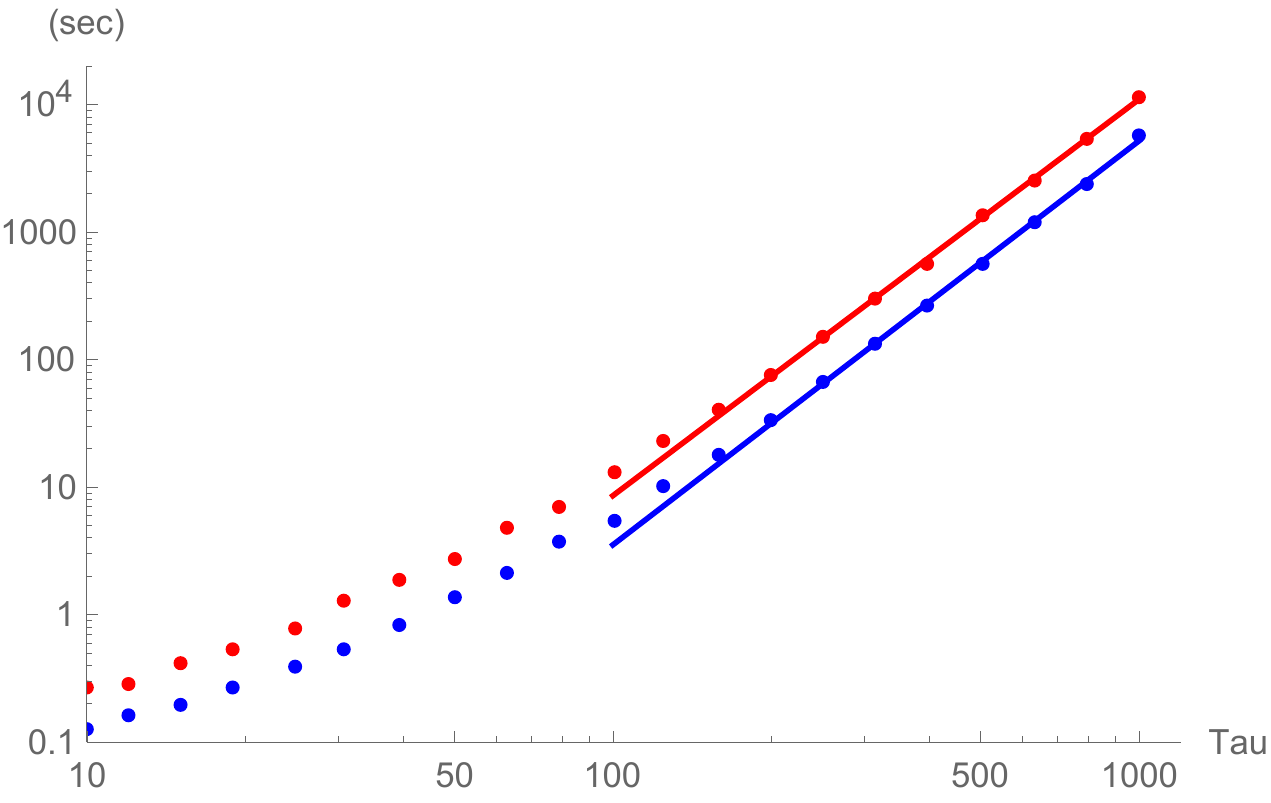}
    \caption{Running time of our algorithm for the Jacobi-Anger expansion as a function of $\tau$,
    implemented in Wolfram Mathematica~11.
    The upper red data set has $\epsilon = 10^{-9}$ and the lower blue $\epsilon = 10^{-4}$.
    The number of decimal digits in the intermediate steps
    is $(n/3) \ln(n/\epsilon)$ where $n$ is the degree of the input Laurent polynomial
    and the factor of 3 is empirically chosen
    to make the numerical error negligible.
    The straight lines represent functions $\mathrm{const}\cdot \tau^\gamma$ of exponents 
    $\gamma=3.11$ for $\epsilon = 10^{-9}$ (red) 
    and $\gamma=3.17$ for $\epsilon = 10^{-4}$ (blue).
    The top right data point has 2172 primitive matrices in the decomposition.
    }
    \label{fig:ja}
\end{figure}

\section{Application to Matrix inversion}
\label{sec:inverse}

While there are slightly more efficient implementations of matrix inversion problems~\cite{HHL} 
using quantum signal processing~\cite{GSLW2018},
here we contend ourselves with an eigenvalue transformation perspective.
The techniques of Refs.~\cite{GSLW2018} reduces the number of ancilla qubits by one or two,
and hence relieves some burden of implementing controlled unitary,
but the underlying mathematics, 
regarding polynomial approximations and finding interspersing single-qubit unitaries,
is unchanged.

Suppose a hermitian matrix $H$ of norm 1 that we wish to invert 
is block-encoded in a unitary $W$ so that $W$ has eigenvalues $\mp e^{\pm i\theta_\lambda}$
associated with an eigenvalue $\lambda$ of $H$.
This encoding is the same as in the Hamiltonian simulation above.
The condition for $H$ being hermitian is not too restrictive since,
for any matrix $M$,
an enlarged matrix
$\ket 0 \bra 1 \otimes M + \ket 1 \bra 0 \otimes M^\dagger$
is always hermitian.
Then, we want eigenvalue transformation $\mp e^{\pm i\theta_\lambda} \mapsto 1/\sin \theta_\lambda$.
As we should not invert a singular matrix, we assume that eigenvalues of $H$ 
are bounded away from zero by $1/\kappa$ where $\kappa \ge 1$ is the condition number of $H$.
(Stricly zero eigenvalues are fine if we are interested in a pseudo-inverse.)
Thanks to the condition number assumption, we need to find a polynomial approximation to the function
$\mp e^{\pm i\theta_\lambda} \mapsto 1/\sin \theta_\lambda$ 
that is good for values $\sin \theta_\lambda$ away from zero by $1/\kappa$.
For this purpose, there is a useful polynomial~\cite{ChildsKothariSomma2015Linear}:
\begin{lemma} \label{lem:approxinverse}
Let $\epsilon > 0$, $\kappa \ge 1$ and $z \in U(1)$.
Suppose integers $b \ge b' \ge 1$ satisfy $b \ge \kappa^2 \ln(2/\epsilon)$ and $b' \ge \sqrt{b \ln(8/\epsilon)}$.
For $\sin \varphi = (z-z^{-1})/(2i) \in \RR $ with $|\sin \varphi| \ge 1/\kappa > 0$,
we have
\begin{align}
\left| 
\frac{2i}{\kappa(z-z^{-1})}
-
\underbrace{\frac{2i}{2^{2b}\kappa(z-z^{-1})} \sum_{k=-b'}^{b'}\binom{2b}{b+k}(1-z^{2k})}_{f(z)}
\right|
&\le 
\epsilon. \label{eq:ckspoly}
\end{align}
Moreover, for all $z \in U(1)$ we have $|f(z)| \le 2b'/\kappa$.
\end{lemma}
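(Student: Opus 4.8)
The strategy is to recognize the sum $\sum_{k=-b'}^{b'}\binom{2b}{b+k}(1-z^{2k})$ as a truncation of a binomial expansion and then bound the truncation tail together with the elementary tail of a Gaussian-like sequence. First I would write $z = e^{i\varphi}$ so that $z^{2k} = e^{2ik\varphi}$ and observe that
\begin{align}
\sum_{k=-b}^{b}\binom{2b}{b+k} z^{2k} = z^{-2b}(1+z^2)^{2b} = (z + z^{-1})^{2b} = (2\cos\varphi)^{2b}.
\end{align}
Hence the full (untruncated, symmetric) sum $\sum_{k=-b}^{b}\binom{2b}{b+k}(1-z^{2k}) = 2^{2b} - (2\cos\varphi)^{2b}$, and dividing by $2^{2b}\kappa(z-z^{-1})/(2i) = 2^{2b}\kappa\sin\varphi$ gives exactly $\frac{1}{\kappa\sin\varphi}\bigl(1 - \cos^{2b}\varphi\bigr)$. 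So the ``ideal'' (fully summed) approximant equals $\frac{1}{\kappa\sin\varphi} - \frac{\cos^{2b}\varphi}{\kappa\sin\varphi}$, and I must control two separate errors: (a) the difference between this ideal approximant and $\frac{1}{\kappa\sin\varphi}$, namely $\frac{\cos^{2b}\varphi}{\kappa\sin\varphi}$; and (b) the difference between the ideal approximant and the actually-truncated $f(z)$, coming from the dropped terms $b' < |k| \le b$.

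For error (a): when $|\sin\varphi| \ge 1/\kappa$ we have $\cos^{2}\varphi = 1 - \sin^2\varphi \le 1 - 1/\kappa^2 \le e^{-1/\kappa^2}$, so $\cos^{2b}\varphi \le e^{-b/\kappa^2} \le e^{-\ln(2/\epsilon)} = \epsilon/2$ using $b \ge \kappa^2\ln(2/\epsilon)$; multiplying by $|1/(\kappa\sin\varphi)| \le 1$ (since $\kappa|\sin\varphi| \ge 1$) keeps this $\le \epsilon/2$. For error (b): the dropped terms contribute $\frac{1}{2^{2b}\kappa\sin\varphi}\sum_{b' < |k| \le b}\binom{2b}{b+k}(1 - z^{2k})$, whose magnitude is at most $\frac{2}{2^{2b}\kappa|\sin\varphi|}\sum_{b'<|k|\le b}\binom{2b}{b+k} \le \frac{2}{2^{2b}}\sum_{|k|>b'}\binom{2b}{b+k}$, again using $\kappa|\sin\varphi|\ge 1$. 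This is a standard binomial (Chernoff / Hoeffding) tail: $2^{-2b}\sum_{|k|>b'}\binom{2b}{b+k} \le 2e^{-b'^2/(2b)}$, which by $b' \ge \sqrt{b\ln(8/\epsilon)}$ is $\le 2e^{-\ln(8/\epsilon)/2}$; being slightly careful with constants (the cleanest route is $2^{-2b}\sum_{|k|>b'}\binom{2b}{b+k} \le 2\exp(-b'^2/(2b)) \le \epsilon/4$ and then the prefactor $2$ above gives $\le \epsilon/2$). Summing (a) and (b) yields the claimed bound $\epsilon$.

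For the final claim $|f(z)| \le 2b'/\kappa$ valid for \emph{all} $z \in U(1)$ (where $\sin\varphi$ may be small, so the earlier bound is useless), I would not cancel the $1/(z-z^{-1})$ factor but instead write, for each $k$,
\begin{align}
\frac{1 - z^{2k}}{z - z^{-1}} = -\frac{z(1 - z^{2k})}{z^2 - 1} = -z\,(1 + z^2 + z^4 + \cdots + z^{2k-2})
\end{align}
for $k \ge 1$ (a genuine polynomial identity, no division by zero), and the analogous finite geometric sum for $k \le -1$; the $k=0$ term vanishes. Each such bracket has at most $|k| \le b'$ terms each of modulus $1$ on $U(1)$, so $\bigl|(1-z^{2k})/(z-z^{-1})\bigr| \le |k| \le b'$. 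Therefore $|f(z)| \le \frac{2\cdot\frac{1}{2}}{2^{2b}\kappa}\sum_{k=-b'}^{b'}\binom{2b}{b+k}\,b' \le \frac{b'}{2^{2b}\kappa}\sum_{k=-b}^{b}\binom{2b}{b+k} = \frac{b'}{\kappa}$; tracking the constant $2i/(2^{2b}\kappa)$ more carefully (its modulus is $2/(2^{2b}\kappa)$) gives $|f(z)| \le \frac{2b'}{2^{2b}\kappa}\sum_k\binom{2b}{b+k} = \frac{2b'}{\kappa}\cdot 2^{-2b}\cdot 2^{2b}$, wait --- one should simply note $\sum_{|k|\le b'}\binom{2b}{b+k} \le 2^{2b}$, yielding $|f(z)| \le 2b'/\kappa$ directly. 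The main obstacle is purely bookkeeping: getting the Chernoff-type binomial tail constant and the various factors of $2$ and $i$ to line up so that the two error contributions each come in under $\epsilon/2$; the conceptual content (the $(2\cos\varphi)^{2b}$ identity and the geometric-series rewriting of $(1-z^{2k})/(z-z^{-1})$) is straightforward.
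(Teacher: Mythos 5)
Your proposal follows essentially the same route as the paper: split the error into the cosine-power term (bounded by $(1-1/\kappa^2)^b\le e^{-b/\kappa^2}\le\epsilon/2$, using $\kappa|\sin\varphi|\ge 1$) plus the truncation tail of the symmetric binomial sum bounded by Hoeffding, and prove $|f(z)|\le 2b'/\kappa$ via the polynomial identity giving $\left|(1-z^{2k})/(z-z^{-1})\right|\le |k|$ on $U(1)$ together with $\sum_{|k|\le b'}\binom{2b}{b+k}\le 2^{2b}$, exactly as in the paper. The only correction needed is in your tail constant: Hoeffding for $\mathrm{Bin}(2b,\tfrac12)$ gives $2^{-2b}\sum_{|k|>b'}\binom{2b}{b+k}\le 2e^{-b'^2/b}$ (exponent $b'^2/b$, not $b'^2/(2b)$), and it is this sharper exponent that lets $b'\ge\sqrt{b\ln(8/\epsilon)}$ yield the $\le\epsilon/4$ step (hence $\le\epsilon/2$ after the factor $2$ from $|1-z^{2k}|\le 2$); with the weaker exponent as you wrote it, that numerical step would not follow from the stated hypothesis.
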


The function $f(z)$ is a genuine Laurent polynomial%
\footnote{The polynomial of \cref{eq:ckspoly} is the same as that in 
\cite[Lemma~17-19]{ChildsKothariSomma2015Linear}.
The bound there is similar to ours, but the polynomial degree is worse than ours by a factor of $\log \kappa$.
This difference is due to a different normalization --- 
we approximate $1/(\kappa x)$ rather than $1/x$.
Our analysis might look simpler, but it is not due to a ``new'' approach;
the ``difference'' is only in the usage of exponential functions rather than trigonometric functions.
} 
since the sum vanishes at $z = \pm 1$.

\begin{proof}
First, let $b \ge b' \ge 1$ be any integers, and let $z = e^{i\varphi} \in U(1)$ be any complex number.
Then,
\begin{align}
\left| 
\underbrace{
1 - \left( \frac{z+z^{-1}}{2} \right)^{2b}
}_{g(z)} 
- 
\underbrace{
\frac{1}{2^{2b}}\sum_{k=-b'}^{b'}\binom{2b}{b+k}(1-z^{2k}) 
}_{h(z)}
\right| 
\le 4e^{-b'^2/b} \label{eq:binomial}
\end{align}
because
$\left| 2^{-2b}\sum_{k~:~|k| > b'} \binom{2b}{b+k}(1-z^{2k}) \right| 
\le 2^{-2b+1}\sum_{k~:~|k| > b'} \binom{2b}{b+k}$
and Hoeffding's inequality on the tail of binomial probability distributions implies \cref{eq:binomial}.
If $\sin \varphi = \frac{z-z^{-1}}{2i} $ with $|\sin \varphi| \ge 1/\kappa > 0$, then
\begin{align}
| 1 - g(z) | = \left| \left( \frac{z+z^{-1}}{2} \right)^{2b} \right| &\le e^{-b/\kappa^2},
\end{align}
since $(1-\sin^2 \varphi)^{b} \le e^{-b/\kappa^2}$ whenever $|\sin \varphi| \ge 1/\kappa$.

Thus, for large $b$ we see that $1-((z+z^{-1})/2)^{2b}$ vanishes when $\sin \varphi = 0$ (i.e., $z= \pm 1$),
but is close to $1$ for $|\sin \varphi| \ge 1/\kappa$.
By \cref{eq:binomial} this function can be replaced with a lower degree polynomial function.
Indeed, for $|\sin \varphi| \ge 1/\kappa$ we have
\begin{align}
\left| \frac{2i}{\kappa (z - z^{-1})} - \frac{2i h(z)}{\kappa (z - z^{-1})} \right|
\le |1-h(z)| \le e^{-b/\kappa^2} + 4 e^{-b'^2/b},
\end{align}
which implies \cref{eq:ckspoly}.

For the last claim, we observe a chain of (in)equalities:
For any integer $k \ge 1$ we have
\begin{align*}
(z-z^{-1})(z^{k-1}+z^{k-3}+\cdots+z^{-k+3}+z^{-k+1}) &= z^k - z^{-k},\\
| (z^k-z^{-k})/(z-z^{-1}) | &\le k &\text{ for } z \in U(1),\\
|\sin^2 k\varphi /\sin \varphi | \le |\sin k\varphi / \sin \varphi| &\le k &\text{ for }\varphi \in \RR,\\
|(2-z^{2k}-z^{-2k}) / (z-z^{-1})| &\le 2k &\text{ for }z \in U(1).
\end{align*}
The function $f(z)$ is the ``average'' over $k$ of 
\[
\frac{(2-z^{2k}-z^{-2k})i}{(z-z^{-1})\kappa}
\]
with respect to a subnormalized probability distribution.
Therefore the claim follows.
\end{proof}

Choosing $b' = \lceil \kappa \ln(8/\epsilon) \rceil$,
and feeding a real-on-circle anti-reciprocal Laurent polynomial $f(z)/\ln(8/\epsilon)$,
which is at most 1 in magnitude by the last claim of \cref{lem:approxinverse},
into our algorithm,
we obtain a desired eigenvalue inversion quantum algorithm.
The success probability can be as small as $\Omega(1/(\kappa \log(1/\epsilon))^2)$,
and hence we had better amplify the amplitude for post-selection,
enlarging the quantum gate complexity by a factor of $\calO(\kappa \log (1/\epsilon))$.
Overall, the quantum gate complexity is proportional to the product of
the degree of the Laurent polynomial above and the number of iterations for the amplitude amplification.

\section{Discussion}

We have determined the scope of $SU(2)$-valued periodic polynomial functions and their decomposition (\cref{thm:composition}),
and analyzed the algorithmic aspects.
Our algorithm for the decomposition is not numerically stable in a usual sense
---
a numerically stable algorithm should only require $\polylog(N/\epsilon)$ bits of precision,
rather than $\poly(N\log(1/\epsilon))$.
The instability appears to be unavoidable 
in any method that reduces polynomial degree iteratively by one at a time
 (such as our Step~5),
at least in the early stage of the polynomial degree reduction.
The numerical error arises due to the small norm of leading coefficients in our algorithm,
and the small leading coefficients of an input polynomial are necessary
if a nonpolynomial function admits converging polynomial approximations.
However, it might be the case that the leading coefficient matrix $C^{(m)}_m$ becomes large in norm
rather quickly in Step~5 of our algorithm,
in which case our analysis could be loose.
For a schematic example, 
consider an identity $t^{2n} P + t^{-2n}(I-P) = \left( t P + t^{-1} (I-P) \right)^{2n}$
for any projector $P$.
The numerical instability to decompose the left-hand side into the right-hand side is
far less severe than that in the worst case of Step~5,
and similar situations might occur during the execution of Step~5 for some class of input functions.
This deserves further investigation.

\begin{acknowledgments}
I thank Guang Hao Low for valuable discussions, and Robin Kothari for useful comments on the manuscript.
\end{acknowledgments}

\appendix

\section{Jordan's Lemma and block encoding of Hamiltonians} \label{app:jordan}

The following is a well-known fact, but we include it here for completeness.

\begin{lemma}
Let $P$ and $Q$ be arbitrary self-adjoint projectors on a finite dimensional complex vector space $V$.
Then, $V$ decomposes into orthogonal subspaces $V_j$ invariant under $P$ and $Q$,
where each $V_j$ has dimension $1$ or $2$.
\end{lemma}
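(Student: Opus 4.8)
The plan is to study the single self-adjoint operator $A = P + Q$ (or equivalently the operator $PQP$ restricted appropriately) and use its spectral decomposition to build the invariant subspaces. First I would observe that $A = P+Q$ is self-adjoint, so $V$ decomposes orthogonally into eigenspaces $V_\mu = \ker(A - \mu I)$ for the distinct eigenvalues $\mu$ of $A$. The key algebraic fact to establish is that each $V_\mu$ is itself invariant under both $P$ and $Q$. For this I would use the commutator identity: a short computation gives $[P,[P,Q]] = P + Q - 2PQP$ and more to the point one checks that $P$ maps $\ker(A-\mu I)$ into itself because, for $v$ with $(P+Q)v = \mu v$, one verifies $(P+Q)(Pv) = \mu (Pv)$ by expanding using $P^2 = P$, $Q^2 = Q$ and the relation $Qv = \mu v - Pv$; the terms reorganize so that $Pv$ is again a $\mu$-eigenvector. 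The analogous statement for $Q$ follows by symmetry.

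Having reduced to a single eigenspace $W := V_\mu$ on which $P$ and $Q$ both act, I would next argue that within $W$ the operator $P$ has at most the two eigenvalues $0$ and $1$ (being a projector), and split $W = (W \cap \ker P) \oplus (W \cap \operatorname{im} P)$. On $W \cap \ker P$ we have $Pv = 0$, so $Qv = \mu v$, meaning $Q$ acts as the scalar $\mu$, which forces $\mu \in \{0,1\}$ there; hence every one-dimensional subspace spanned by a single vector of an orthonormal basis of $W \cap \ker P$ is invariant under both $P$ and $Q$. The same holds for $W \cap \operatorname{im} P$ after swapping roles, or more efficiently one notes $Q$ restricted to $W\cap\operatorname{im}P$ is again a projector-like scalar situation. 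For the genuinely two-dimensional pieces, I would take a vector $v$ in $W \cap \operatorname{im} P$ that is \emph{not} an eigenvector of $Q$, and show $\operatorname{span}\{v, Qv\}$ (equivalently $\operatorname{span}\{Pv', Qv'\}$ for a suitable $v'$) is two-dimensional and invariant under both $P$ and $Q$, using $P v = v$, $Q^2 v = Qv$, and the eigenvalue relation to express $P(Qv)$ back inside the span. Iterating over an orthogonal basis of each such piece (Gram--Schmidt to keep the decomposition orthogonal) yields the claimed decomposition into invariant subspaces of dimension $1$ or $2$.

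The main obstacle I anticipate is bookkeeping rather than conceptual: verifying cleanly that the two-dimensional candidate subspaces are simultaneously $P$- and $Q$-invariant, and that one can choose them \emph{mutually orthogonal}. The invariance itself is a direct expansion, but ensuring orthogonality of the full family requires care — the natural approach is to work eigenspace-by-eigenspace of $A=P+Q$ (these are automatically mutually orthogonal by self-adjointness) and then within each eigenspace further decompose using orthogonality of $\ker P$ versus $\operatorname{im} P$ inside that eigenspace. A slicker alternative, which I would mention, is to consider the operator $PQP$ on $\operatorname{im} P$: it is self-adjoint with eigenvalues in $[0,1]$, its eigenspaces are orthogonal, and for an eigenvector $v$ with $PQPv = \cos^2\!\phi\, v$ one pairs $v \in \operatorname{im}P$ with a corresponding vector in $\operatorname{im}Q$ (namely $Qv$, suitably normalized) to span the two-dimensional invariant block, the degenerate cases $\cos^2\phi \in \{0,1\}$ giving the one-dimensional blocks. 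Either route reduces everything to the spectral theorem for a single self-adjoint operator on a finite-dimensional space, which is the only nontrivial input.
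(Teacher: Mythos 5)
There is a genuine gap at the central step of your main route: the eigenspaces of $A = P+Q$ are \emph{not} invariant under $P$ and $Q$ in general. Invariance of every eigenspace of the self-adjoint operator $A$ under the self-adjoint $P$ is equivalent to $[A,P]=0$, i.e.\ to $[P,Q]=0$, which would make the lemma trivial. Your proposed verification does not go through: from $(P+Q)v=\mu v$ one gets $QPv=(\mu-1)Qv$, hence $(P+Q)Pv = Pv+(\mu-1)Qv$, which is not $\mu\,Pv$ unless $\mu=1$ or $v$ happens to be an eigenvector of $Q$. A concrete counterexample on $\CC^2$: $P=\ket{0}\bra{0}$ and $Q=\ket{+}\bra{+}$; the two eigenlines of $P+Q$ are neither $P$- nor $Q$-invariant. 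The operator whose eigenspaces \emph{are} simultaneously invariant is $(P-Q)^2$ (equivalently $(P+Q-I)^2$), since $P(P-Q)^2=P-PQP=(P-Q)^2P$ and symmetrically for $Q$; if you want a ``one self-adjoint operator'' route, that is the operator to use, after which you still must decompose each of its eigenspaces (on which $P$ and $Q$ act as projectors with $PQP$ scalar on the range of $P$, etc.) into $1$- and $2$-dimensional blocks.

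Your ``slicker alternative'' — diagonalize $PQP$ on the range of $P$ and pair an eigenvector $\ket\psi$ with $Q\ket\psi$ — is essentially the paper's argument: the paper takes an eigenvector of $PQP$ (or of $PQ'P$, $P'QP'$, $P'Q'P'$, whichever has nonzero support, with $P'=I-P$, $Q'=I-Q$) with nonzero eigenvalue, shows $\mathrm{span}\{\ket\psi,Q\ket\psi\}$ is invariant under both projectors, and then induces on the orthogonal complement. As you state it, the alternative is incomplete on two points that the paper's induction handles automatically: (i) the part of $V$ not reached by these blocks (e.g.\ vectors in $\ker P$ orthogonal to all the $Q\ket\psi$'s) must still be split into invariant lines, and (ii) mutual orthogonality of distinct blocks needs the small check $\braket{\psi_2|Q\psi_1}=\braket{\psi_2|PQP\psi_1}=a_1\braket{\psi_2|\psi_1}=0$. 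Both are fixable, but the main route as written is not.
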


For any hermitian operator $H$, let $\supp(H)$ denote the subspace support of $H$, 
i.e., the orthogonal complement of the kernel of $H$.
Clearly, $\supp(H)$ is an invariant subspace of $H$,
and $H$ is invertible within $\supp(H)$.

\begin{proof}
We will find a subspace $W$ of dimension at most 2 that is invariant under both $P$ and $Q$.
This is sufficient since the orthogonal complement of $W$ 
is also invariant and the proof will be completed by induction in the dimension of $V$.

Put $P ' = I - P$ and $Q' = I - Q$, and consider the identities
\begin{align}
&PQP + P Q' P + P' Q P' + P' Q' P' = I,\\
&\supp(PQP) + \supp(P Q' P) + \supp(P' Q P') + \supp(P' Q' P') = V,
\end{align}
where the second equality is 
because the intersection of the orthogonal complements of the four supports is zero.
Therefore, at least one of the four supports is nonzero, 
and without loss of generality assume $S = \supp(PQP) \neq 0$.
Let $\ket \psi \in S$ be an eigenvector of $PQP$; $PQP \ket \psi = a \ket \psi$.
The associated eigenvalue $a$ is nonzero by definition of $S$.
Now consider $W = \mathrm{span} \{\ket \psi, Q \ket \psi\}$.
Observe that $a \ket \psi = PQP \ket \psi = PPQP \ket \psi = a P \ket \psi$,
and hence $P \ket \psi = \ket \psi$. Moreover, $P Q \ket \psi = PQP \ket \psi = a \ket \psi$.
Therefore, $W$ is a nonzero invariant subspace under both $P$ and $Q$.
\end{proof}

This Jordan's lemma can be applied to two hermitian unitaries (reflections) $U_1,U_2$
as any hermitian unitary is $2P - I$ for some projector $P$.
It immediately follows that there is a basis where $U_1$ is diagonal and
$U_1 U_2$ is block-diagonal with at most two-dimensional blocks
and each block belongs to $U(1)$ or $U(2)$.
In any such irreducible two-dimensional block,
both unitaries cannot be scalar multiplications because of the irreducibility and
hence we have
\begin{align}
U_1 = \begin{pmatrix} 1 & 0 \\ 0 & -1 \end{pmatrix}, \quad 
U_2 = \begin{pmatrix} 
    \lambda & e^{i\phi} \sqrt{1-\lambda^2} \\ 
    e^{-i\phi} \sqrt{1-\lambda^2} & -\lambda
\end{pmatrix}
\end{align}
for some real number $\lambda \in [-1,1]$ and an angle $\phi \in \mathbb R$,
up to a permutation of rows and columns.
Therefore, the product $W = -i U_1 U_2$ 
is a rotation in a two-dimensional subspace
that appears in Grover search algorithm~\cite{Grover1996},
and has eigenvalues $\pm e^{\mp i \theta}$ where
$ \sin \theta = \lambda $.
This is relevant in a Hamiltonian simulation problem where
the Hamiltonian is ``block-encoded'' as 
$H = \left( \bra G  \otimes I \right) U_2 \left( \ket G \otimes I\right)$,
which is the case if $H$ is represented as, e.g., a linear combination of Pauli operators.
Here $\ket G$ is a state on a \emph{proper} tensor factor of the Hilbert space on which $U_{2}$ act,
and $U_1 = (2 \ket G \bra G - I)\otimes I$ .

\bibliographystyle{apsrev4-1}
\nocite{apsrev41Control}
\bibliography{qsp-ref}
\end{document}